\newtheorem{theorem}{Theorem}
\newtheorem{corollary}[theorem]{Corollary}
\newtheorem{lemma}[theorem]{Lemma}
\newtheorem{observation}[theorem]{Observation}
\newtheorem{definition}[theorem]{Definition}
\Crefname{theorem}{Theorem}{Theorems}
\Crefname{claim}{Claim}{Claims}
\Crefname{proposition}{Proposition}{Propositions}
\Crefname{corollary}{Corollary}{Corollaries}
\Crefname{algorithm}{Algorithm}{Algorithms}
\Crefname{lemma}{Lemma}{Lemmas}
\Crefname{definition}{Definition}{Definitions}
\Crefname{observation}{Observation}{Observations}
\DeclareMathOperator{\poly}{poly}
\DeclareMathOperator{\sgn}{sgn}
\DeclareMathOperator{\aff}{aff}
\newcommand{\N}{\mathbb{N}}
\newcommand{\R}{\mathbb{R}}
\newcommand{\RELU}[1]{\textsc{2L-ReLU-NN-Train(\ensuremath{#1})}}
\newcommand{\LT}[1]{\textsc{2L-LT-NN-Train(\ensuremath{#1})}}
\newcommand{\problemdef}[3]{
	\begin{center}
		\begin{minipage}{0.95\textwidth}
			\noindent
			\textsc{#1}
			
			\vspace{2pt}
			\setlength{\tabcolsep}{5pt}
			\begin{tabular}{p{0.13\textwidth}p{0.82\textwidth}}
				\textbf{Input:} 		& #2 \\
				\textbf{Question:} 	& #3
			\end{tabular}
		\end{minipage}
	\end{center}
}
\tikzset{bigneuron/.style={circle, draw, inner sep = 0, minimum width = 5.5ex}}
\tikzset{smallneuron/.style={circle, draw, inner sep = 0, minimum width = 4ex}}
\tikzset{transform/.style={fill=white, circle}}
\tikzset{connection/.style={-{Stealth}}}
\newcommand{\relu}{
	\begin{tikzpicture}
		\draw [line width=1pt] (-1.1ex,0) -- (0,0) -- (0.9ex,0.9ex);
	\end{tikzpicture}
}
\title{Training Neural Networks is NP-Hard in Fixed Dimension}
\author[1]{Vincent Froese}
\author[2]{Christoph Hertrich}
\affil[1]{\small
  Algorithmics and Computational Complexity,\protect\\ Faculty~IV, TU Berlin, Berlin, Germany,\protect\\
  vincent.froese@tu-berlin.de}
\affil[2]{Department of Mathematics,\protect\\ London School of Economics and Political Science, London, UK,\protect\\c.hertrich@lse.ac.uk}
\begin{document}

\maketitle

\begin{abstract}
  We study the parameterized complexity of training two-layer neural networks with respect to the dimension of the input data and the number of hidden neurons, considering ReLU and linear threshold activation functions.
  Albeit the computational complexity of these problems has been studied numerous times in recent years, several questions are still open.
  We answer questions by Arora et al.~[ICLR~'18] and Khalife and Basu~[IPCO~'22] showing that both problems are NP-hard for two dimensions, which excludes any polynomial-time algorithm for constant dimension.
  We also answer a question by Froese et al.~[JAIR~'22] proving W[1]-hardness for four ReLUs (or two linear threshold neurons) with zero training error.
  Finally, in the ReLU case, we show fixed-parameter tractability for the combined parameter number of dimensions and number of ReLUs if the network is assumed to compute a convex map.
  Our results settle the complexity status regarding these parameters almost completely.
\end{abstract}

\section{Introduction}

Neural networks with rectified linear unit (ReLU) activations are arguably one of the most fundamental models in modern machine learning~\cite{ABMM18,GBB11,LBH25}. To use them as predictors on unseen data, one usually first fixes an architecture (the graph of the neural network) and then optimizes the weights and biases such that the network performs well on some known training data, with the hope that it will then also generalize well to unseen test data. While the ultimate goal in applications is generalization, \emph{empirical risk minimization} (that is, optimizing the training error) is an important step in this pipeline and understanding its computational complexity is crucial to advance the theoretical foundations of deep learning.

In this paper, we aim to understand how the choice of different meta-parameters, like the input dimension and the width of the neural network, influences the computational complexity of the training problem. To this end, we focus on two-layer neural networks, which can be seen as the standard building block also for deeper architectures.

Formally, a two-layer neural network (see \Cref{Fig:NNArchitecture}) with~$d$ input neurons, $k$ hidden ReLU neurons, and a single output neuron computes a map
\[\phi\colon\R^d\to\R, \quad \phi(\mathbf{x})=\sum_{j=1}^k a_j[\mathbf{w}_j\cdot\mathbf{x} + b_j]_+,\]
where $\mathbf{w}_j\in\R^d$ and~$a_j\in\{-1,1\}$ are the weights between the layers, $b_j$ are the biases at the hidden neurons, and~$[x]_+ \coloneqq \max(0,x)$ is the \emph{rectifier} function. Notice that restricting~$a_j$ to~$\{-1,1\}$ is without loss of generality because we can normalize by pulling any nonnegative factor into $\mathbf w_j$ and $b_j$.
We also study neural networks with \emph{linear threshold activation} in \Cref{sec:LT}.

Given training data $\mathbf{x}_1,\ldots,\mathbf{x}_n\in\R^d$ with labels~$y_1,\ldots,y_n\in\R$, the task of training such a network is to find~$\mathbf{w}_j,b_j,$ and $a_j$ for each~$j\in[k]$ such that the training error~$\sum_{i=1}^n \mathcal{L}(\phi(\mathbf{x}_i),\ y_i)$
for a given loss function $\mathcal{L}\colon\R\times\R\to\R_{\geq0}$ is minimized.
Formally, the decision version of two-layer ReLU neural network training is defined as follows:

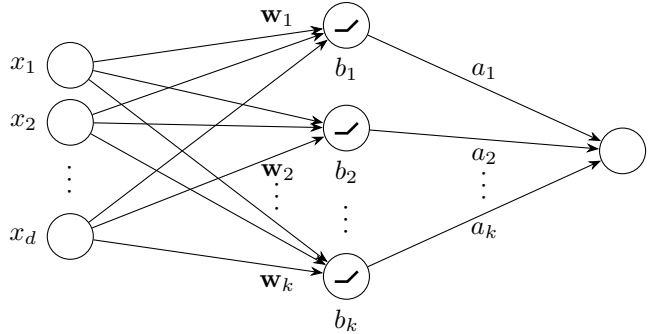
\begin{figure}[t]
	\centering
	\begin{tikzpicture}[]
		\footnotesize
		\node[smallneuron, label=below:{$b_1$}] (n1) at (0,11ex) {\relu};
		\node[smallneuron, label=below:{$b_2$}] (n2) at (0,2ex) {\relu};
		\node[rotate=90] (npunkt) at (0,-6ex) {$\mathbf{\cdots}$};
		\node[smallneuron, label=below:{$b_k$}] (n3) at (0,-11ex) {\relu};
		\node[smallneuron, label=left:{$x_1$}] (in1) at (-24ex,7.5ex) {};
		\node[smallneuron, label=left:{$x_2$}] (in2) at (-24ex,2.5ex) {};
		\node[rotate=90] (inpunkt) at (-24ex,-2.5ex) {$\mathbf{\cdots}$};
		\node[smallneuron, label=left:{$x_d$}] (in3) at (-24ex,-7.5ex) {};
		\node[smallneuron] (out) at (24ex,0) {};
		\draw[connection] (in1) -- (n1) node[above, pos=0.8] {$\mathbf w_1$};
		\draw[connection] (in2) -- (n1);
		\draw[connection] (in3) -- (n1);
		\draw[connection] (in1) -- (n2);
		\draw[connection] (in2) -- (n2);
		\draw[connection] (in3) -- (n2) node[below, pos=0.8] {$\mathbf w_2$};
		\draw[connection] (in1) -- (n3);
		\draw[connection] (in2) -- (n3);
		\draw[connection] (in3) -- (n3) node[below, pos=0.8] {$\mathbf w_k$};
                \node[rotate=90] (npunkt) at (-6ex,-4ex) {$\mathbf{\cdots}$};
		\draw[connection] (n1) -- (out) node[above, pos=0.5] {$a_1$};
		\draw[connection] (n2) -- (out) node[below, pos=0.5] {$a_2$};
                \node[rotate=90] (npunkt) at (12ex,-3ex) {$\mathbf{\cdots}$};
		\draw[connection] (n3) -- (out) node[below, pos=0.5] {$a_k$};
	\end{tikzpicture}
	\caption{Neural network architecture we study in this paper: After the input layer (left) with $d$ input neurons, we have one hidden layer with $k$ ReLU neurons and a single output neuron without additional activation function.}
	\label{Fig:NNArchitecture}
\end{figure}

\problemdef{\RELU{\mathcal{L}}}
{Data points~$(\mathbf{x}_1,y_1),\ldots,(\mathbf{x}_n,y_n)\in \R^d\times\R$, a number~$k\in\N$ of ReLUs, and a target error~$\gamma\in\R_{\ge 0}$.}
{Are there weights $\mathbf w_1, \ldots, \mathbf w_k\in \R^d$, biases $b_1,\ldots,b_k\in\R$, and coefficients $a_1,\ldots,a_k\in\{-1,1\}$ such that
	\[
	\sum_{i=1}^n \mathcal{L}\Bigg(\sum_{j=1}^k a_j[\mathbf{w}_j\cdot\mathbf{x}_i + b_j]_+,\ y_i\Bigg) \le \gamma?
	\]
      }
      Note that in the over-parameterized case where~$k \ge n$, the network can exactly fit any~$n$ input points\footnote{Assuming that~$y_i=y_j$ whenever $\mathbf{x}_i=\mathbf{x}_j$.} (achieving training error~$\gamma=0$)~\cite[Theorem~1]{ZBHRV21}.
      Thus, we henceforth assume that~$k <n$.
  
  \RELU{\mathcal{L}} is known to be NP-hard~\cite{DWX20,GKMR21}, but all known reductions require the input dimension to be part of the input. The current state-of-the-art exact algorithm for convex loss~$\mathcal{L}$ is by \citet{ABMM18} and runs in $O(2^kn^{dk}\poly(L))$ time, where~$L$ is the input bit-length.

As regards the computational complexity of \RELU{\mathcal{L}}, \citet{ABMM18} posed the question(s) whether a running time
\begin{quote}
  \emph{``that is polynomial in the data size and/or the number of hidden nodes, assuming that the input dimension is a fixed constant''}
\end{quote}
is possible.
That is, they asked two questions. The first corresponds to the ``and'' statement, which can be phrased as follows:\\[1em]
\textbf{Question 1}: Is there an algorithm running in~$(nk)^{f(d)}\poly(L)$ time for some function~$f$?\\[1em]
\noindent
In other words, the question is whether \RELU{\mathcal{L}} is in the complexity class XP when parameterized by~$d$.
The second question corresponding to the ``or'' statement can then be interpreted as\\[1em]
\textbf{Question 2}: Is there an algorithm running in~$n^{f(d)}g(k,d)\poly(L)$ or~$k^{f(d)}g(n,d)\poly(L)$ time for some functions~$f$ and~$g$?\\[1em]
\noindent
We observe that the second running time is clearly possible since $k<n$ holds by assumption, and hence the algorithm by~\citet{ABMM18} runs in $g(n,d)\poly(L)$ time.
Hence, it remains open whether $n^{f(d)}g(k,d)\poly(L)$ time is possible, which is equivalent to (uniform) fixed-parameter tractability with respect to~$k$ for every constant~$d$.

Clearly, Question~1 is the stronger statement, that is, a positive answer implies a positive answer to Question~2.
\citet{ABMM18} conclude with
\begin{quote}
  \emph{``Resolving this dependence on network size would be another step towards clarifying the theoretical complexity of training ReLU DNNs and is a good open question for future research, in our opinion.''}
\end{quote}

Note that \citet{FHN22} proved that, for~$k=1$, there is no algorithm running in~$g(d)n^{o(d)}$ time unless the Exponential Time Hypothesis fails.
Hence, this result already partially answered the two questions above by excluding any algorithm running in $n^{o(d)}g(d,k)\poly(L)$ time.

In this paper, we answer Question~1 negatively by showing NP-hardness for~$d=2$ in~\Cref{thm:d=2}, indicating that we cannot get rid of the exponential dependence on the network size in the algorithm by Arora et al.~\cite{ABMM18} even if the dimension is fixed.
As regards Question~2, we further exclude (assuming the Exponential Time Hypothesis) any algorithm running in time $n^{o(d)}g(d,k)\poly(L)$ even for the case~$\gamma=0$ and prove W[1]-hardness with respect to~$d$ for~$k=4$ (\Cref{thm:k=4}), which answers an open question by Froese et al.~\cite{FHN22}.

We also obtain analogous hardness results if linear threshold activation functions are used instead of ReLUs. As in the ReLU case, it is well-known that training linear threshold networks is NP-hard~\cite{BR92,KB22}. The running time of the state-of-the-art algorithm due to \citet{KB22} is polynomial in $n$ for fixed $d$ and $k$, but exponential in the latter two parameters. \citet{KB22} posed an analogous question to Question~1 for linear threshold networks, which we answer negatively in \Cref{cor:d=2}, excluding a polynomial running time even for fixed dimension. We also show that we cannot expect fixed-parameter tractability with respect to~$d$ even for $k=1$ (\Cref{cor:k=1}) and also not for $k=2$ and $\gamma=0$ (\Cref{cor:k=2}).

On the positive side, we give an algorithm running in~$2^{O(k^2d)}\poly(k,L)$ time for ReLU neural networks if~$\gamma=0$ and the function computed by the network is assumed to be convex (\Cref{thm:convexFPT}).
Note that this running time yields fixed-parameter tractability with respect to~$k$ for every constant~$d$, and thus answers Question~2 positively for this restricted special case.

\paragraph*{Implications and Limitations.}

In the following we provide a brief discussion of the implications and limitations of our results from various perspectives.

\subparagraph{Input Dimension.} \Cref{thm:d=2} implies that \RELU{\mathcal{L}} is in fact NP-hard for every fixed $d\geq 2$. The straight-forward reduction simply pads all the input vectors with $d-2$ zeros. Similarly, \Cref{cor:d=2} holds for every fixed $d\geq2$.
\subparagraph{Target Error.} The hardness results \Cref{thm:d=2,thm:k=4,cor:d=2,cor:k=2} also hold for every fixed $\gamma\geq 0$. The reduction is straight-forward by introducing a set of incompatible data points which force the network to incur an additional error of $\gamma$.  For our positive result \Cref{thm:convexFPT}, however, there is indeed a difference in the complexity between the two cases~$\gamma=0$ and~$\gamma>0$. While we show fixed-parameter tractability for $\gamma=0$, the same problem is W[1]-hard for $\gamma>0$, already in the case $k=1$~\cite{FHN22}.
\subparagraph{Number of ReLUs.} It is not too difficult to see (although it requires some work) that our particular reduction in \Cref{thm:k=4} can be extended to any $k\geq 4$ by introducing more data points far away from the existing data points which enforce the usage of additional ReLUs which then cannot be used to fit the data points of the actual reduction. Therefore, \Cref{thm:k=4} holds for every fixed $k\geq 4$. Similarly, \Cref{cor:k=1} holds for every fixed $k\geq1$ and \Cref{cor:k=2} holds for every fixed $k\geq2$.
\subparagraph{Other Activation Functions.} Our hardness results hold for the piecewise linear ReLU activation function and the piecewise constant linear threshold activation function. Extending them to other piecewise linear or constant activation functions like leaky ReLU or maxout should be straight-forward. However, achieving analogous results for smooth activation functions like sigmoids probably requires fundamentally different techniques and is beyond the scope of this paper.
\subparagraph{Training vs.\ Learning.} Our results are concerned with the problem of minimizing the training error. While this is inherently different from minimizing the generalization error, there are indeed deep connections between these two problems~\cite{SB14}. In particular, as pointed out by~\citet{GKMR21}, hardness of training implies hardness of proper learning if one permits arbitrary data distributions. However, such hardness results can often be overcome by either posing additional assumptions on the data distributions or switching to more general learning paradigms like improper learning~\cite{GKKT17}.
\subparagraph{Exact vs.\ Approximate Training.} In practice, it arguably often suffices to train a neural network to approximate instead of exact optimality. The results in this paper are concerned with solving the training problem to exact global optimality. However, since \Cref{thm:d=2,thm:k=4,cor:d=2,cor:k=2} already hold for training error $\gamma=0$, they even rule out the existence of approximation algorithms with any multiplicative factor. We conceive that for appropriate notions of \emph{additive} approximation (see, e.g., \cite{GKMR21}), our reductions can also be used to show hardness of additive approximation. However, this would significantly increase the technical complexity of the analysis and is therefore beyond the scope of this paper. We leave it as an open research question to analyze the influence of meta-parameters like input dimension and number of hidden neurons on additive approximation of the training problem.

\paragraph*{Related Work.}
\citet{DWX20} and \citet{GKMR21} showed NP-hardness of \RELU{\mathcal{L}} for~$k=1$, but require non-constant dimension.
For target error $\gamma=0$, the problem is NP-hard for every constant~$k\ge 2$ and polynomial-time solvable for~$k=1$~\cite{GKMR21}.
\citet{GKMR21} provide further conditional running time lower bounds and (additive) approximation hardness results.
\citet{FHN22} considered the parameterized complexity regarding the input dimension~$d$ and
proved W[1]-hardness and an ETH-based running time lower bound of~$n^{\Omega(d)}$ for~$k=1$.

\citet{BDL22} studied networks where the output neuron also is a ReLU and proved NP-hardness (and implicitly W[1]-hardness with respect to~$d$) for~$k=2$ and~$\gamma=0$.
\citet{BHJMW22} showed that training 2-layer ReLU networks with two output and two input neurons ($\R^2\to\R^2$) is complete for the class $\exists\R$ (existential theory of the
reals) and thus likely not contained in~NP. This also implies NP-hardness, but note that in contrast to our results, their reduction does not work for one-dimensional outputs. Their result strengthens a previous result by~\citet{AKM21} who proved $\exists\R$-completeness for networks with a specific (not fully connected) architecture.
\citet{PE20} showed that training 2-layer neural networks can be formulated as a convex program which yields a polynomial-time algorithm for constant dimension~$d$. However, they considered a regularized objective and their result requires the number~$k$ of hidden neurons to be very large (possibly equal to the number~$n$ of input points) and hence does not contradict our NP-hardness result for~$d=2$.

To study the computational complexity of training ReLU networks, a crucial ingredient is to know the set of (continuous and piecewise linear) functions precisely representable with a certain network architecture. This is well-understood for two-layer networks~\cite{ABMM18,BHJMW22,DK22}, but much trickier for deeper networks~\cite{HHL23,HS21,HBDS21}. Similar to the study of ReLU networks by Arora et al.~\cite{ABMM18}, Khalife and Basu~\cite{KB22} studied the expressiveness and training complexity for linear threshold activation functions.
For an extensive survey on intersections of deep learning and polyhedral theory, we refer to \citet{HMST23}.

\section{Preliminaries}\label{sec:prelims}

\paragraph*{Notation.} For $n\in\N$, we define~$[n]\coloneqq\{1,\ldots,n\}$.
For~$X\subseteq \R^d$, we denote by~$\aff(X)$ the affine hull of~$X$ and by~$\dim(X)$ the dimension of~$\aff(X)$.

Throughout this work, we assume~$\mathcal{L}\colon \R\times\R\to\R_{\geq0}$ to be any loss function with~$\mathcal{L}(x,y)=0 \iff x=y$.

\paragraph*{Parameterized Complexity.}
We assume basic knowledge on computational complexity theory.
Parameterized complexity is a multivariate approach to analyze the computational complexity of problems~\cite{DF13,Cyg15}.

An instance~$(x,k)$ of a parameterized problem~$L\subseteq\Sigma^*\times\N$ is a pair with~$x\in\Sigma^*$ being a problem instance and $k\in\N$ being the value of a certain~\emph{parameter}.
A parameterized problem~$L$ is \emph{fixed-parameter tractable (fpt)} (contained in the class FPT) if there exists an algorithm deciding whether~$(x,k)\in L$ in $f(k)\cdot|x|^{O(1)}$ time, where~$f$ is a function solely depending on~$k$.
Note that a parameterized problem in FPT is polynomial-time solvable for every constant parameter value where, importantly, the degree of the polynomial does not depend on the parameter value.
The class XP contains all parameterized problems which can be solved in polynomial time for constant parameter values, that is, in time~$f(k)\cdot |x|^{g(k)}$.
It is known that $\text{FPT}\subsetneq\text{XP}$.
The class W[1] contains parameterized problems which are widely believed not to be in FPT.
That is, a W[1]-hard problem (e.g.~\textsc{Clique} parameterized by the size of the sought clique)
is not solvable in $f(k)\cdot |x|^{O(1)}$ time.
It is known that $\text{FPT}\subseteq \text{W[1]}\subseteq \text{XP}$.

W[1]-hardness is defined via \emph{parameterized reductions}.
A parameterized reduction from~$L$ to~$L'$ is an algorithm mapping an instance~$(x,k)$ in $f(k)\cdot |x|^{O(1)}$ time to an instance~$(x',k')$ such that~$k'\le g(k)$ for some function~$g$ and $(x,k)\in L$ if and only if~$(x',k')\in L'$.

\paragraph*{Exponential Time Hypothesis.}
The Exponential Time Hypothesis (ETH)~\cite{IP01} states that \textsc{3-SAT} cannot be solved in subexponential time in the number~$n$ of Boolean variables in the input formula, that is, there exists a constant~$c > 0$ such that \textsc{3-SAT} cannot be solved in~$O(2^{cn})$ time.

The ETH implies $\text{FPT}\neq\text{W[1]}$~\cite{Cyg15} (which implies $\text{P}\neq\text{NP}$).
In fact, ETH implies that \textsc{Clique} cannot be solved in~$\rho(k)\cdot n^{o(k)}$ time for any function~$\rho$, where~$k$ is the size of the sought clique and $n$~is the number of vertices in the graph~\cite{Cyg15}.

\paragraph*{Geometry of 2-Layer ReLU Networks.}
For proving our results, it is crucial to understand the geometry of a function $\phi\colon\R^d\to\R$ represented by a two-layer ReLU network. Here, we only discuss properties required to understand our results and refer to~\cite{ABMM18,BHJMW22,DK22} for additional discussions in this context. Such a function $\phi$ is a continuous and piecewise linear function. Each hidden neuron with index $j\in[k]$ defines a hyperplane $\mathbf{w}_j\cdot\mathbf{x} + b_j=0$ in~$\R^d$. These~$k$ hyperplanes form a hyperplane arrangement. Inside each cell of this hyperplane arrangement, the function $\phi$ is affine. The graph of $\phi$ restricted to such a cell is called a \emph{(linear) piece} of $\phi$.

Consider a hyperplane~$H$ from the hyperplane arrangement. Let $\mathbf w$ be an orthonormal vector of $H$ and let $J\subseteq[k]$ be the non-empty subset of indices of neurons which induce precisely $H$. Note that $\mathbf{w}_j$ is a scaled version of $\mathbf w$ for each $j\in J$. Let $\mathbf x\in\R^d$ be a point on $H$ which does not lie on any other hyperplane in the arrangement. There are exactly two $d$-dimensional cells in the arrangement containing $\mathbf x$: one on each side of~$H$. The difference of the directional derivatives of the corresponding two pieces of $\phi$ in the direction of $\mathbf w$ is exactly $\sum_{j\in J} a_j\lVert\mathbf{w}_j\rVert$. In particular, this is independent of $\mathbf{x}$ and therefore constant along $H$. If this value is positive, we say that $H$ is a \emph{convex} hyperplane of $\phi$. If it is negative, we say that~$H$ is a \emph{concave} hyperplane of $\phi$. Note that this matches with~$\phi$ being convex or concave locally around every point $\mathbf x \in H$ which does not belong to any other hyperplane in the arrangement. Moreover, a point $\mathbf x\in\R^d$ is called a convex (concave) \emph{breakpoint} of $\phi$ if it lies exclusively on one convex (concave) hyperplane of $\phi$.

One important observation we will heavily use is the following: If we know that $\phi$ originates from a 2-layer neural network with $k$ hidden neurons and we know that we need indeed $k$ distinct hyperplanes to separate the pieces of $\phi$, then each hyperplane must be induced by exactly one neuron (and not several). Then the hyperplane corresponding to the $j$-th neuron is convex if and only if $a_j>0$ and concave if and only if $a_j<0$.
For input dimension $d=2$, each of the hyperplanes in the arrangement is actually a line in~$\R^2$. We call such a line a \emph{breakline} and define \emph{convex} and \emph{concave} breaklines accordingly.

\section{NP-Hardness for Two Dimensions}

In this section we prove our main result that \RELU{\mathcal{L}} is NP-hard for two dimensions, thus excluding any running time of the form~$(nk)^{f(d)}$.

\begin{theorem}\label{thm:d=2}
  \RELU{\mathcal{L}} is NP-hard even for~$d=2$ and~$\gamma=0$.
\end{theorem}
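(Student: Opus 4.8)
The plan is to reduce from an NP-hard constraint-satisfaction problem --- I would use \textsc{3-SAT}, or a closely related Boolean assignment problem --- and to exploit the rigidity of exact ($\gamma=0$) interpolation together with the convex/concave dichotomy of breaklines from \Cref{sec:prelims}. The guiding idea is that, since $k<n$ and we demand zero error, any solution is forced to realize a continuous piecewise linear $\phi\colon\R^2\to\R$ interpolating all data points using at most $k$ breaklines. If the data can be designed so that every such $\phi$ must use \emph{exactly} $k$ distinct breaklines placed in prescribed ``slots'', then by the observation in \Cref{sec:prelims} each breakline is induced by a single neuron, and its only remaining degree of freedom is the sign $a_j\in\{-1,1\}$, i.e.\ whether it is convex or concave. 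This turns each breakline into a Boolean variable and reduces the whole training instance to choosing a consistent convex/concave labeling.

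First I would build \emph{variable gadgets}: small clusters of collinear data points whose labels overdetermine two adjacent affine pieces and therefore force a kink along one specific line, while leaving the convex-versus-concave choice free; reading this choice as \emph{true}/\emph{false} encodes the variable. Second I would build \emph{clause gadgets} as further data points placed in the region where the breaklines of a clause's literals interact, choosing their labels so that an exact interpolation exists \emph{if and only if} the signs of the incident breaklines satisfy the clause (for instance encoding ``at least one literal true'' as ``at least one incident breakline convex''). Since a breakline is an entire line in $\R^2$ and hence crosses the whole plane, the main layout task is to position the gadgets --- using widely separated coordinates and generic slopes --- so that breaklines interact only where a clause intends them to, and so that the forced affine background outside the gadgets stays globally consistent. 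Completeness is then routine: from a satisfying assignment I read off the signs $a_j$, place each breakline in its slot, and verify that the resulting explicit network attains zero error.

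The hard part will be \emph{soundness}, that is, ruling out any ``cheating'' interpolation. Because the weights $\mathbf w_j$ and biases $b_j$ range over all of $\R$, I must exclude fits that place breaklines in unintended positions or orientations, that let several neurons collapse onto one line, or that use extra slope freedom to hit the clause points while violating the intended sign semantics. The core of the proof is therefore a family of forcing lemmas establishing that every zero-error $\phi$ must (a) use exactly $k$ breaklines, one per variable slot; (b) realize each breakline in its prescribed location and slope; and (c) reproduce the intended affine piece on every cell meeting a gadget. I would prove these by over-determining each affine piece with enough data points in general position inside its cell, and by placing points straddling each intended breakline so that continuity together with the fact that the slope-jump $\sum_{j\in J} a_j\lVert\mathbf w_j\rVert$ is constant along a breakline (again from \Cref{sec:prelims}) pins down its position. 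Once the geometry is shown to be rigid, the convex/concave sign of each breakline is the sole surviving degree of freedom, the clause gadgets enforce precisely the intended satisfiability condition, and the equivalence with the source instance follows.
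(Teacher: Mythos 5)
Your central encoding is self-defeating, and this is a genuine gap rather than a technicality. You want each variable gadget to (a) force a breakline to lie exactly on a prescribed line with prescribed adjacent affine pieces, yet (b) leave its convex/concave sign free. These two requirements contradict each other: the sign of a breakline is precisely the sign of the difference of directional derivatives of the two adjacent pieces, so any data that pins down $\phi$ on both sides of the line (which is what your forcing lemmas demand for soundness) also pins down the sign. Conversely, if the sign is to remain free, the two admissible interpolations disagree at every point near the line but off it, so no gadget data may be placed there --- and then nothing forces the kink to lie on the prescribed line at all. Moreover, a clause point cannot ``read'' a sign; it reads a value of $\phi$, which depends on the free magnitude $\sum_{j\in J}a_j\lVert\mathbf{w}_j\rVert$, and pinning magnitudes runs into the same contradiction. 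Two further structural obstacles: a single breakline alters $\phi$ on an entire half-plane (a one-line kink has no compact support), so your widely separated gadgets still interfere globally; and reading a three-literal clause at one point requires the three literal breaklines to (nearly) meet there, while a variable occurring in several clauses forces all of its clause points to be collinear on that variable's line --- an incidence-realizability constraint in $\R^2$ that arbitrary formulas need not satisfy.

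The paper resolves exactly these issues by inverting your encoding: the convexity pattern is fixed and the geometry is the degree of freedom. It reduces from \textsc{Positive One-In-Three 3-SAT} and uses \emph{levees} (\Cref{def:levee}, \Cref{obs:levee}): compactly supported bumps realized by four ReLUs with the fixed sign pattern convex--concave--concave--convex, so the function vanishes outside a stripe and gadgets provably do not interact except where intended. The discrete choice is encoded in the \emph{slope} (rotation) of a levee, and the selection gadget (\Cref{lem:selectiongadget}) shows that its data points admit exactly $\ell$ zero-error fits with four breaklines, one per admissible slope. Clauses and variables each get one selection gadget (three slopes, resp.\ two), and consistency is enforced by additional label-$1$ points placed at \emph{pairwise} intersections of center-lines, which sidesteps the triple-incidence problem entirely. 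If you want to salvage your plan, you would have to replace sign-freedom by slope-freedom in the variable gadgets, at which point you are essentially reconstructing the paper's proof.
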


We give a polynomial-time reduction from the following NP-complete problem~\cite{GJ79}.

\problemdef{Positive One-In-Three 3-SAT (POITS)}
{A Boolean formula~$F$ in conjunctive normal form with three positive literals per clause.}
{Is there a truth assignment for the variables such that each clause in~$F$ has exactly one true literal?}

Our construction will be such that the function represented by the neural network is equal to zero everywhere except for a finite set of ``stripes'', in which the function forms a \emph{levee} (see \Cref{def:levee}), that is, when looking at a cross section, the function goes up from~0 to 1, stays constant for a while, and goes down from 1 to 0 again. See \Cref{fig:selectiongadget} (right) for a top view of a levee and \Cref{fig:levee_crosssection} for a cross section of a levee.

\begin{definition}\label{def:levee}
A \emph{levee} with slope $s\in\R$ (centered at the origin) is the function~$f_s\colon\R^2\to\R$ with
\begin{equation}\label{eq:levee}
f_s(x_1,x_2)=\left\{
\begin{array}{ll}
	0,&\text{if }|x_2-sx_1|\geq 2,\\
	1,&\text{if }|x_2-sx_1|\leq 1,\\
	2+x_2-sx_1,& \text{if }x_2-sx_1 \in \;]-2,-1[,\\
	2-x_2+sx_1,& \text{if }x_2-sx_1\in \;]1,2[.
\end{array}
\right.
\end{equation}
\end{definition}

\begin{observation}\label{obs:levee}
	A levee $f_s$ is a continuous, piecewise linear function with two convex and two concave breaklines. It can be realized with four ReLUs as follows:
	\[
		f_s(x_1,x_2) = [x_2-sx_1+2]_+ - [x_2-sx_1+1]_+ - [x_2-sx_1-1]_+ + [x_2-sx_1-2]_+.
	\]
\end{observation}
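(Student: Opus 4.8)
The plan is to reduce the bivariate claim to a one-dimensional verification and then read off the geometric features from the rectifier representation. Since the entire expression depends on $(x_1,x_2)$ only through the quantity $t \coloneqq x_2 - s x_1$, I would introduce the univariate function
\[
g(t) \coloneqq [t+2]_+ - [t+1]_+ - [t-1]_+ + [t-2]_+,
\]
so that the right-hand side of the claimed identity equals $g(x_2 - sx_1)$, while the definition of $f_s$ in \eqref{eq:levee} likewise depends only on $t$. It then suffices to check $g(t)$ against the four cases of \Cref{def:levee} on the five intervals $(-\infty,-2]$, $[-2,-1]$, $[-1,1]$, $[1,2]$, and $[2,\infty)$ determined by the breakpoints of the individual rectifiers.

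On each of these intervals exactly which rectifiers are active is fixed, so $g$ is affine there, and a direct case analysis matches it to the levee profile: $g\equiv 0$ for $|t|\ge 2$, $g(t)=t+2$ on $[-2,-1]$, $g\equiv 1$ on $[-1,1]$, and $g(t)=2-t$ on $[1,2]$. This establishes the claimed identity $f_s(x_1,x_2)=g(x_2-sx_1)$.

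Continuity and piecewise linearity then come for free, since any finite signed sum of rectifiers composed with an affine map is continuous and piecewise linear. For the breaklines I would observe that the four rectifiers share the common normal direction but have pairwise distinct offsets $-2,-1,1,2$, so they induce four distinct parallel lines $x_2 - sx_1 = c$ with $c\in\{-2,-1,1,2\}$; in particular, each such line is induced by exactly one neuron. Invoking the characterization from the preliminaries --- a breakline is convex if the associated coefficient $a_j$ is positive and concave if it is negative --- the coefficients $(+1,-1,-1,+1)$ attached to the offsets $(-2,-1,1,2)$ yield two convex breaklines ($c=\pm 2$) and two concave breaklines ($c=\pm 1$), as claimed. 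Equivalently, one verifies that the slope of $g$ jumps by $+1,-1,-1,+1$ as $t$ crosses $-2,-1,1,2$.

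I expect no genuine obstacle here; the only point requiring a little care is the convex/concave bookkeeping, where one must confirm that the sign convention of the directional-derivative jump from the preliminaries is applied consistently. This is immediate once we note that the four breaklines are distinct and parallel, so the clean rule relating the sign of $a_j$ to convexity applies without the complication of several neurons inducing a single hyperplane.
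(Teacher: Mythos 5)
Your proof is correct, and it is precisely the routine verification that the paper leaves implicit (the statement appears as an \emph{observation} with no written proof): substituting $t = x_2 - sx_1$, checking the five intervals determined by the breakpoints $\pm 1, \pm 2$, and reading off convexity/concavity from the signs of the coefficients on the four distinct parallel breaklines. Your bookkeeping of the slope jumps $(+1,-1,-1,+1)$ at $t=-2,-1,1,2$ matches the paper's convention from the preliminaries, so there is nothing to add.
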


Similar levees have been used by \citet{BHJMW22} to prove $\exists\R$-completeness of neural network training, however, in a conceptually very different way. In their work, levees encode variable values via the \emph{slopes of the function} on the non-constant regions of the levee. In contrast, in our reduction, we encode discrete choices via rotation of the levees, that is, via the \emph{slopes of the breaklines in the two-dimensional input space}.

\paragraph*{Selection Gadget.}
We describe a gadget allowing us to model a discrete choice between~$\ell$ many possibilities (levees). We will describe the gadget centered at the origin of the $x_1$-$x_2$-plane. Later in our reduction, we will use several shifted versions of this gadget. An illustration of the selection gadget is given in \Cref{fig:selectiongadget}.

Each of the $\ell$ different choices corresponds to one of $\ell$ different slopes $s_1<s_2<\dots<s_\ell$. First, we place $13$ data points on the $x_2$-axis (with $x_1=0$, we call this vertical line $h_0$):

\begin{center}\small
	\begin{tabular}{c|ccccccccccccc}
		$x_2$&$-4$&$-3$&$-2$&$-\nicefrac{5}{3}$&$-\nicefrac{4}{3}$&$-1$&$0$&$1$&$\nicefrac{4}{3}$&$\nicefrac{5}{3}$&$2$&$3$&$4$\\\hline
		$y$&$0$&$0$&$0$&$\nicefrac{1}{3}$&$\nicefrac{2}{3}$&$1$&$1$&$1$&$\nicefrac{2}{3}$&$\nicefrac{1}{3}$&$0$&$0$&$0$
	\end{tabular}
\end{center}

Next, we need a small $\epsilon>0$ to be chosen later in a global context. The only condition we impose on $\epsilon$ in order to make the selection gadget work is that $\epsilon\leq\min\bigl\{\frac{1}{3\left| s_1 \right|},\frac{1}{3\left| s_\ell \right|}\bigr\}$. Based on this, we place $9$ data points parallel to the $x_2$-axis with $x_1=-\epsilon$ (we call the corresponding vertical line $h_{-\epsilon}$):

\begin{center}\small
	\begin{tabular}{c|ccccccccc}
		$x_2$&$-4-\epsilon s_\ell$&$-3-\epsilon s_\ell$&$-2-\epsilon s_\ell$&$-1-\epsilon s_1$&$0$&$1-\epsilon s_\ell$&$2-\epsilon s_1$&$3-\epsilon s_1$&$4-\epsilon s_1$\\\hline
		$y$&$0$&$0$&$0$&$1$&$1$&$1$&$0$&$0$&$0$
	\end{tabular}
\end{center}

Furthermore, similar to above, we place $9$ data points parallel to the $x_2$-axis with~$x_1=\epsilon$ (we call the corresponding line $h_\epsilon$):

\begin{center}\small
\begin{tabular}{c|ccccccccc}
	$x_2$&$-4+\epsilon s_1$&$-3+\epsilon s_1$&$-2+\epsilon s_1$&$-1+\epsilon s_\ell$&$0$&$1+\epsilon s_1$&$2+\epsilon s_\ell$&$3+\epsilon s_\ell$&$4+\epsilon s_\ell$\\\hline
	$y$&$0$&$0$&$0$&$1$&$1$&$1$&$0$&$0$&$0$
\end{tabular}
\end{center}

Finally, we place $2(\ell-1)$ many data points as follows: for each $i\in[\ell-1]$, we introduce one data point $\mathbf{q}_i^-\coloneqq (-\frac{4}{s_{i+1}-s_i},-\frac{2(s_i+s_{i+1})}{s_{i+1}-s_i})$, as well as one data point $\mathbf{q}_i^+\coloneqq(\frac{4}{s_{i+1}-s_i},\frac{2(s_i+s_{i+1})}{s_{i+1}-s_i})$. All these data points receive label $y=0$.

It is not too difficult to verify that a levee with slope $s_i$, $i\in[\ell]$, fits all data points of a selection gadget.
We omit the simple but tedious calculations here.
More intricately, the following lemma shows that a selection gadget indeed models a discrete choice between exactly $\ell$ possibilities.

\newcommand{\datapoint}[3]{
	\node[circle,inner sep=1.8,thick,draw,fill=black!#1] at (#2,#3) {};
}
\newcommand{\bottom}{00}
\newcommand{\low}{15}
\newcommand{\high}{30}
\newcommand{\topp}{45}
\newcommand{\dist}{1/3}

\newcommand{\myline}[3]{
	\draw[very thin,#3,black!60] #1 -- #2;
}

\newcommand{\fourlines}[2]{
	\myline{(-5,-2+#1)}{(5,-2-#1)}{#2}
	\myline{(-5,-1+#1)}{(5,-1-#1)}{#2}
	\myline{(-5,1+#1)}{(5,1-#1)}{#2}
	\myline{(-5,2+#1)}{(5,2-#1)}{#2}
}

\newcommand{\pointsandlines}{
	\fourlines{0}{dashed}
	\fourlines{5}{dashed}
	\fourlines{-5}{dashed}
	
	\datapoint{\bottom}{0}{-4}
	\datapoint{\bottom}{0}{-3}
	\datapoint{\bottom}{0}{-2}
	\datapoint{\low}{0}{-5/3}
	\datapoint{\high}{0}{-4/3}
	\datapoint{\topp}{0}{-1}
	\datapoint{\topp}{0}{0}
	\datapoint{\topp}{0}{1}
	\datapoint{\high}{0}{4/3}
	\datapoint{\low}{0}{5/3}
	\datapoint{\bottom}{0}{2}
	\datapoint{\bottom}{0}{3}
	\datapoint{\bottom}{0}{4}
	
	\foreach \xone in {-\dist,\dist}{
		\datapoint{\bottom}{\xone}{-4-\dist}
		\datapoint{\bottom}{\xone}{-3-\dist}
		\datapoint{\bottom}{\xone}{-2-\dist}
		\datapoint{\topp}{\xone}{-1+\dist}
		\datapoint{\topp}{\xone}{0}
		\datapoint{\topp}{\xone}{1-\dist}
		\datapoint{\bottom}{\xone}{2+\dist}
		\datapoint{\bottom}{\xone}{3+\dist}
		\datapoint{\bottom}{\xone}{4+\dist}
	}
	
	\datapoint{\bottom}{-4}{-2}
	\datapoint{\bottom}{-4}{2}
	\datapoint{\bottom}{4}{-2}
	\datapoint{\bottom}{4}{2}
}

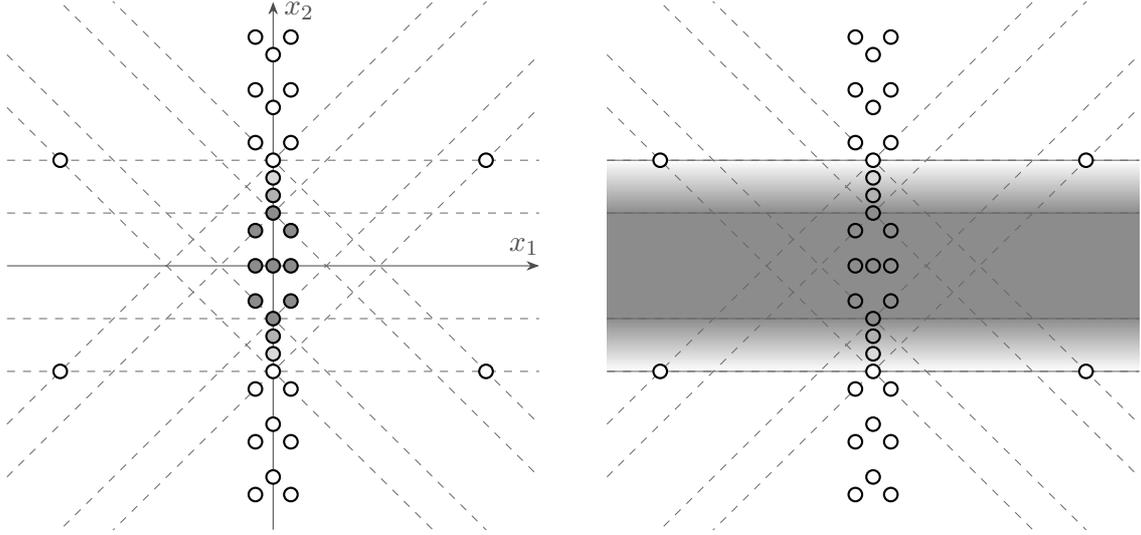
\begin{figure}[tb]
	\centering
	\hfill
	\begin{tikzpicture}[scale=0.7]
          \clip (-5,-5) rectangle (5,5);
          \small
          \draw[connection,color=black!70] (0,-5) -- (0,5);
          \node[anchor=north west, inner sep=0, color=black!70] at (0.2,5) {$x_2$};
          \draw[connection,color=black!70] (-5,0) -- (5,0); \node[anchor=south east, inner sep=0, color=black!70] at (5, 0.2){$x_1$};
          \pointsandlines
	\end{tikzpicture}\hfill\hfill
	\begin{tikzpicture}[scale=0.7]
          \clip (-5,-5) rectangle (5,5);
		\fill[top color=black!\topp, bottom color=black!\bottom] (-5,-2) rectangle (5,-1);
		\fill[color=black!\topp] (-5,-1) rectangle (5,1);
		\fill[top color=black!\bottom, bottom color=black!\topp] (-5,1) rectangle (5,2);
		\fourlines{0}{}
		\pointsandlines
	\end{tikzpicture}
	\hfill{}
	
	\caption{Illustration of the selection gadget with $\ell=3$ and $s_1=-1$, $s_2=0$, $s_3=1$. Both figures show the $x_1$-$x_2$-plane while the $y$-coordinate is indicated via the darkness of the gray color. The left picture shows all data points belonging to the gadget as well as the breaklines of the three possible levees fitting the data points. In addition to these features, the right picture shows a levee with slope $s_2=0$ as one of three possibilities to fit the data points of the gadget.}
	\label{fig:selectiongadget}
\end{figure}

\begin{lemma}\label{lem:selectiongadget}
  Let~$f\colon\R^2\to\R$ be a continuous piecewise linear function with only four breaklines
  that fits all the data points of the selection gadget.
  Then, $f=f_{s_i}$ for some $i\in[\ell]$.
\end{lemma}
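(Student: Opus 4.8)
The plan is to pin down $f$ one geometric feature at a time, starting from its restriction to the three vertical lines and ending with the far-away points $\mathbf{q}_i^\pm$.

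First I would analyze the cross-section $g(x_2)\coloneqq f(0,x_2)$ along $h_0$. Since $f$ has only four breaklines, $g$ is piecewise linear with at most four breakpoints, because each breakline meets the line $h_0$ in at most one point. The thirteen data points on $h_0$ are arranged so that the four ramp points on each side are collinear (slopes $+1$ and $-1$) while the remaining points force the values $0$ and $1$. A short (if tedious) case distinction on where the at most four breakpoints can lie then shows that $g$ must be exactly the levee cross-section: $g$ is $0$, rises with slope $1$, is constant $1$, falls with slope $-1$, and is $0$ again, with breakpoints precisely at $x_2\in\{-2,-1,1,2\}$. In particular the four breaklines are non-vertical and pass through $(0,-2),(0,-1),(0,1),(0,2)$; call them $L_{-2},L_{-1},L_1,L_2$.

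Next I would determine the slopes of these breaklines. On the ramp-up cell (the piece of $f$ that $h_0$ crosses between $x_2=-2$ and $x_2=-1$) the map $f$ is affine and equals $0$ along $L_{-2}$ and $1$ along $L_{-1}$; since two distinct level sets of a non-constant affine function are parallel, $L_{-2}\parallel L_{-1}$. The same argument on the ramp-down cell gives $L_1\parallel L_2$. Writing $\sigma^-,\sigma^+$ for the two common slopes and $t^-\coloneqq x_2-\sigma^-x_1$, $t^+\coloneqq x_2-\sigma^+x_1$, the gradient jumps across the four breaklines (all of magnitude $1$, with signs read off from the convex/concave pattern of $g$) force
\[
 f(x_1,x_2)=[t^-+2]_+-[t^-+1]_+-[t^+-1]_++[t^+-2]_+ .
\]
Evaluating the points on $h_\epsilon$ and $h_{-\epsilon}$ (which sit in the outer, ramp, and flat regions) then shows, using the smallness of $\epsilon$, that the outer regions carry value $0$ and that $\sigma^-,\sigma^+\in[s_1,s_\ell]$. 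If $\sigma^-=\sigma^+\eqqcolon s$ this is already the levee $f_s$, but so far $s$ could be any value in $[s_1,s_\ell]$.

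The crux, and the step I expect to be the main obstacle, is to use the points $\mathbf{q}_i^\pm$ to upgrade this to $\sigma^-=\sigma^+\in\{s_1,\dots,s_\ell\}$. The key geometric fact is that $\mathbf{q}_i^-$ is exactly the intersection of $\{x_2-s_ix_1=-2\}$ with $\{x_2-s_{i+1}x_1=2\}$, and $\mathbf{q}_i^+$ is its reflection through the origin; equivalently $t^-(\mathbf{q}_i^-)\le-2\iff\sigma^-\le s_i$ and $t^+(\mathbf{q}_i^-)\ge2\iff\sigma^+\ge s_{i+1}$. The subtlety is that the zero-set of the map above is \emph{not} simply the union of the two outer half-planes: whenever $\sigma^-\neq\sigma^+$ a defect wedge $\{t^-\le-2,\ t^+\ge2\}$ appears on which $f\equiv-1$, and on the overlap of the two ramps $f$ can vanish along the diagonal $t^+=t^-+3$. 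I would therefore argue in two layers. If the closed interval between $\sigma^-$ and $\sigma^+$ contains two consecutive candidate slopes $s_i<s_{i+1}$ (in particular if $\{\sigma^-,\sigma^+\}=\{s_i,s_{i+1}\}$), then $\mathbf{q}_i^-$ or $\mathbf{q}_i^+$ lies in the defect wedge, where $f=-1\neq0$, a contradiction. The remaining possibility is that $\sigma^-,\sigma^+$ lie in a common closed cell $[s_j,s_{j+1}]$ without being its two endpoints; here I would substitute into the two equations $f(\mathbf{q}_j^-)=0$ and $f(\mathbf{q}_j^+)=0$ and check, by a short case analysis on which linear pieces contain these points, that they are jointly infeasible unless $\sigma^-=\sigma^+$, while a single intermediate slope $s\in(s_j,s_{j+1})$ is excluded because then $\mathbf{q}_j^-$ lies strictly inside the support of $f_s$, forcing $f(\mathbf{q}_j^-)>0$. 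Hence $\sigma^-=\sigma^+=s_m$ for some $m\in[\ell]$, and with all four breaklines sharing slope $s_m$ through $(0,c)$, $c\in\{-2,-1,1,2\}$, the displayed formula becomes $f=f_{s_m}$ by \Cref{obs:levee}, which is the claim.
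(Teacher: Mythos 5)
Your opening step (pinning down the cross-section on $h_0$ and concluding that the four breaklines are non-vertical and pass through $(0,-2),(0,-1),(0,1),(0,2)$) matches the paper, and your final treatment of the points $\mathbf{q}_i^{\pm}$ via the ``defect wedge'' is a legitimate alternative to the paper's extremal argument. But there is a genuine gap at the pivotal parallelism step. You assert that on the ramp-up cell $f$ ``equals $0$ along $L_{-2}$ and $1$ along $L_{-1}$'' and deduce $L_{-2}\parallel L_{-1}$. At that point you have only used the thirteen points on $h_0$, and these do not force $f$ to be constant along either breakline; indeed constancy along the breaklines is essentially equivalent to what you are trying to prove, so the step is circular. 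Concretely, the function
\[
f(x_1,x_2)=\bigl[x_2-\tfrac{1}{10}x_1+2\bigr]_+-\bigl[x_2+\tfrac{1}{10}x_1+1\bigr]_+-[x_2-1]_++[x_2-2]_+
\]
is continuous piecewise linear, has exactly four breaklines passing through the four required points on $h_0$ with the required convex/concave pattern, and fits all thirteen data points on $h_0$, yet its two lower breaklines are not parallel and $f$ is not constant along them. What rules this out is the data on $h_{-\epsilon}$ and $h_\epsilon$, which you invoke only \emph{after} writing down your formula, and only to conclude $\sigma^-,\sigma^+\in[s_1,s_\ell]$. This is precisely where the paper's proof does its real work: a counting argument (there are only two convex and two concave breaklines, and each of the segments $I_1,\ldots,I_4$ on $h_{\pm\epsilon}$ must be crossed by one of each kind) localizes where the breaklines meet $h_{\pm\epsilon}$, which shows the breaklines do not cross between $h_{-\epsilon}$ and $h_\epsilon$, which in turn gives constancy of $f$ along $g_1,g_2$ between those lines and then parallelism via the parallelogram argument. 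Without some version of this chain, your displayed formula for $f$ and everything downstream of it is unsupported. (A smaller issue in the same step: even granting parallelism and the unit jump magnitudes, the gradient-jump argument determines $f$ only up to an affine function vanishing on $h_0$, i.e.\ up to a term $\alpha x_1$; the label-$0$ points on $h_{\pm\epsilon}$ kill it, but your formula silently assumes $\alpha=0$.)

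A second, more easily repaired, flaw is that your final case split is not exhaustive. After excluding the case that the closed interval between $\sigma^-$ and $\sigma^+$ contains two consecutive slopes, the remaining possibility is \emph{not} only that $\sigma^-,\sigma^+$ lie in a common closed cell $[s_j,s_{j+1}]$: they may strictly straddle a single slope $s_i$, e.g.\ $\sigma^-\in\;]s_{i-1},s_i[$ and $\sigma^+\in\;]s_i,s_{i+1}[$. That case is missed by both of your cases, though it succumbs to the same computation: then $t^-(\mathbf{q}_i^+)>2$ while $t^+(\mathbf{q}_i^+)<2$, so $f(\mathbf{q}_i^+)=1-[t^+(\mathbf{q}_i^+)-1]_+>0$, contradicting the label $0$ (and symmetrically with $\mathbf{q}_i^-$ if $\sigma^+<s_i<\sigma^-$). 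Where your wedge/infeasibility analysis does apply, it is correct (I verified the equivalences $t^-(\mathbf{q}_i^-)\le -2\iff \sigma^-\le s_i$ and $t^+(\mathbf{q}_i^-)\ge 2\iff\sigma^+\ge s_{i+1}$, and the joint infeasibility on a common cell), so the last part of your plan is salvageable once the parallelism step is proved properly.
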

\begin{proof}
	First, we focus on the three vertical lines $h_{-\epsilon}$, $h_0$, and $h_\epsilon$. Note each of the three lines contains a sequence of nine data points of which the first three have label $0$, the next three have label $1$ and the final three have label $0$ again. For simplicity, consider one of the three lines and denote these nine data points by $\mathbf p_1$ to $\mathbf p_9$. Note that $h_0$ contains even more data points, which will become important later. For the following argument, compare \Cref{fig:levee_crosssection}.
	
	\newcommand{\myaxes}{
		\draw[connection,color=black!70] (-5,0) -- (5,0) node[above] {$x_2$};
		\draw[connection,color=black!70] (0,-0.3) -- (0,2.5) node[right] {$y$};
	}

	\newcommand{\datapointlabelbelow}[4]{
		\datapoint{#1}{#2}{#3}
		\node at (#2,#3-.5) {#4};
	}

	\newcommand{\datapointlabelabove}[4]{
		\datapoint{#1}{#2}{#3}
		\node at (#2,#3+.5) {#4};
	}

	\newcommand{\ninepoints}{
		\datapointlabelbelow{\bottom}{-4}{0}{$\mathbf p_1$}
		\datapointlabelbelow{\bottom}{-3}{0}{$\mathbf p_2$}
		\datapointlabelbelow{\bottom}{-2}{0}{$\mathbf p_3$}
		\datapointlabelabove{\topp}{-1}{1}{$\mathbf p_4$}
		\datapointlabelabove{\topp}{0}{1}{$\mathbf p_5$}
		\datapointlabelabove{\topp}{1}{1}{$\mathbf p_6$}
		\datapointlabelbelow{\bottom}{2}{0}{$\mathbf p_7$}
		\datapointlabelbelow{\bottom}{3}{0}{$\mathbf p_8$}
		\datapointlabelbelow{\bottom}{4}{0}{$\mathbf p_9$}
	}
	
	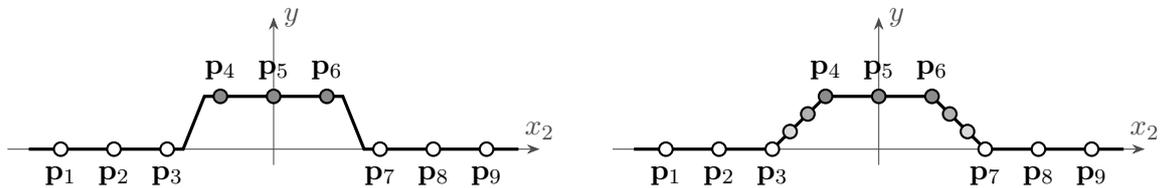
\begin{figure}
		\hfill
		\begin{tikzpicture}[scale=0.7]
			\small
			\myaxes
			\draw[very thick] (-4.6,0) -- (-1.7,0) -- (-1.3,1) -- (1.3,1) -- (1.7,0) -- (4.6,0);
			\ninepoints
		\end{tikzpicture}
		\hfill\hfill
		\begin{tikzpicture}[scale=0.7]
			\small
			\myaxes
			\draw[very thick] (-4.6,0) -- (-2,0) -- (-1,1) -- (1,1) -- (2,0) -- (4.6,0);
			\datapoint{\low}{-5/3}{1/3}
			\datapoint{\high}{-4/3}{2/3}
			\datapoint{\high}{4/3}{2/3}
			\datapoint{\low}{5/3}{1/3}
			\ninepoints
		\end{tikzpicture}
		\hfill{}
		
		\caption{Cross section of the selection gadget through one of the three lines $h_{-\epsilon}$, $h_0$, or $h_\epsilon$. The nine data points (labeled $\mathbf p_1$ to $\mathbf p_9$) on each of these lines force the function $f$ to attain a ``levee-shape'' with the exact position and slope of the ascending and descending sections as the only degrees of freedom (left). The four additional data points on $h_0$ even fix these properties and thus exactly determine $f$ on that line (right).}
		\label{fig:levee_crosssection}
	\end{figure}
	
	Observe that $f$ restricted to one of the three lines is a one-dimensional, continuous, piecewise linear function with at most four breakpoints. Looking at $\mathbf p_2$, $\mathbf p_3$, and $\mathbf p_4$, the corresponding $y$-labels are $0$, $0$, and $1$, respectively. This can only be fitted if there exists a convex breakpoint between $\mathbf p_2$ and $\mathbf p_4$. Analogously, there must be a concave breakpoint between $\mathbf p_3$ and $\mathbf p_5$, another concave breakpoint between $\mathbf p_5$ and $\mathbf p_7$, and a convex breakpoint between $\mathbf p_6$ and $\mathbf p_8$. This uses already all four available breakpoints, so there are no other breakpoints. Therefore, the function on the considered line must be linear outside the segment between $\mathbf p_2$ and $\mathbf p_8$. Since $\mathbf p_1$, $\mathbf p_2$, $\mathbf p_8$, and $\mathbf p_9$ all have label $0$, it follows that the function is constant $0$ outside this segment. Moreover, there is no concave breakpoint outside the segment between $\mathbf p_3$ and $\mathbf p_7$, implying that the function must be convex outside the segment between $\mathbf p_3$ and $\mathbf p_7$. However, since these two points have label $0$ as well, it follows that $f$ must even be constant $0$ there.
	
	Now consider the segment between $\mathbf p_4$ and $\mathbf p_6$. There is no convex breakpoint between $\mathbf p_4$ and $\mathbf p_6$. Therefore, the function must be concave within the segment. Since $\mathbf p_4$, $\mathbf p_5$, and $\mathbf p_6$ all have label $1$, it follows that the function is constant $1$ between $\mathbf p_4$ and $\mathbf p_6$.
	
	Putting together the insights gained so far, it follows that $f$ restricted to the considered line is constant $0$ first, goes up to constant $1$ via a convex and a concave breakpoint between $\mathbf p_3$ and $\mathbf p_4$, and goes down to constant $0$ again via a concave and a convex breakpoint between $\mathbf p_6$ and $\mathbf p_7$ (\Cref{fig:levee_crosssection}, left). Note that the exact location of these breakpoints and the slope in the sloped segments is not implied by the nine data points considered so far.
	
	This changes, however, when also taking into account the four other data points lying on $h_0$. Combined with the insights so far, they completely determine $f$ on this line (\Cref{fig:levee_crosssection}, right):
	\[
		f(0,x_2) = \left\{
			\begin{array}{ll}
				0,&\text{if } x_2\leq-2 \text{ or } x_2\geq2,\\
				1,&\text{if } -1\leq x_2 \leq 1,\\
				2+x_2,&\text{if } -2\leq x_2 \leq -1,\\
				2-x_2,&\text{if } 1\leq x_2 \leq 2.
			\end{array}
		\right.
	\]
	Observe that this is precisely the same as \eqref{eq:levee} with $x_1=0$.
	
	It remains to consider the behavior of $f$ on both sides of $h_0$. To this end, observe that the breakpoints of $f$ restricted to one of the three lines considered so far emerge as intersections of these lines with only four breaklines in total. Let us collect what we know so far about the locations of these four breaklines:
	\begin{itemize}
		\item There are exactly two convex breaklines, intersecting $h_0$ at $(0,-2)$ and $(0,2)$, respectively. We call them $g_1$ and $g_4$, respectively.
		\item There are exactly two concave breaklines, intersecting $h_0$ at $(0,-1)$ and $(0,1)$, respectively. We call them $g_2$ and $g_3$, respectively.
		\item Each of the four segments
		\begin{align*}
			I_1&\coloneqq[(-\epsilon, -2-\epsilon s_{\ell}), (-\epsilon, -1-\epsilon s_{1})]\subseteq [(-\epsilon,-\nicefrac{7}{3}),(-\epsilon,-\nicefrac{2}{3})],\\
			I_2&\coloneqq[(-\epsilon, 1-\epsilon s_{\ell}), (-\epsilon, 2-\epsilon s_{1})]\subseteq [(-\epsilon,\nicefrac{2}{3}),(-\epsilon,\nicefrac{7}{3})],\\
			I_3&\coloneqq[(\epsilon, -2+\epsilon s_{1}), (\epsilon, -1+\epsilon s_{\ell})]\subseteq [(\epsilon,-\nicefrac{7}{3}),(\epsilon,-\nicefrac{2}{3})],\text{ and}\\
			I_4&\coloneqq[(\epsilon, 1+\epsilon s_{1}), (\epsilon, 2+\epsilon s_{\ell})]\subseteq [(\epsilon,\nicefrac{2}{3}),(\epsilon,\nicefrac{7}{3})]
		\end{align*}
	is intersected by exactly one concave and one convex breakline. Here, the inclusions are implied by $\epsilon\leq\min\bigl\{\frac{1}{3\left| s_1 \right|},\frac{1}{3\left| s_\ell \right|}\bigr\}$. See \Cref{fig:segments} for an illustration of the position of these segments.
	\end{itemize}

	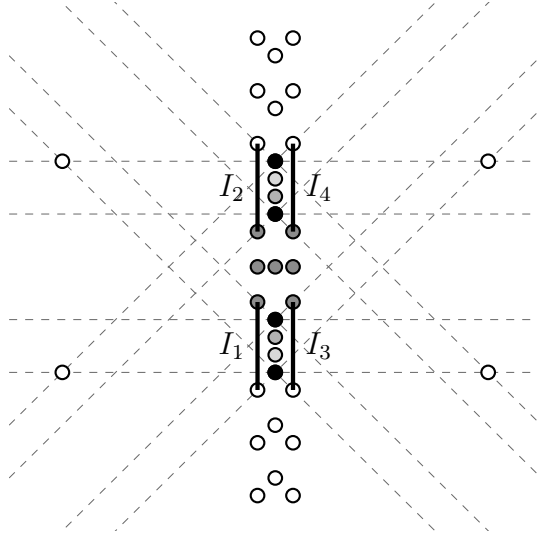
\begin{figure}[tb]
		\centering
		\begin{tikzpicture}[scale=0.7]
			\clip (-5,-5) rectangle (5,5);
			\begin{scope}[opacity=0.3]
				\pointsandlines
			\end{scope}
			\datapoint{100}{0}{-2}
			\datapoint{100}{0}{-1}
			\datapoint{100}{0}{1}
			\datapoint{100}{0}{2}
			\draw[ultra thick] (-1/3, -7/3) -- (-1/3, -2/3) node[midway, left] {$I_1$};
			\draw[ultra thick] (-1/3, 2/3) -- (-1/3, 7/3) node[midway, left] {$I_2$};
			\draw[ultra thick] (1/3, -7/3) -- (1/3, -2/3) node[midway, right] {$I_3$};
			\draw[ultra thick] (1/3, 2/3) -- (1/3, 7/3) node[midway, right] {$I_4$};
		\end{tikzpicture}
		\caption{Illustration of the segments $I_1$ to $I_4$ used in the proof of \Cref{lem:selectiongadget}. The figure also highlights (in black) the data points at $(0,-2)$ and $(0,2)$, each of which lies on a convex breakline, as well as the data points at $(0,-1)$ and $(0,1)$, each of which lies on a concave breakline.}
		\label{fig:segments}
	\end{figure}

	Now consider $g_2$, which goes through $(0,-1)$, and observe that it cannot intersect $I_2$ for the following reason. If it did, it would intersect $h_\epsilon$ at $x_2\leq-1-\nicefrac{5}{3}=-\nicefrac{8}{3}<-\nicefrac{7}{3}$ and hence would neither intersect $I_3$ nor $I_4$. This is a contradiction because there are only two concave breaklines and both $I_3$ and $I_4$ must be intersected by exactly one of them. Consequently, $g_2$ cannot intersect $I_2$, and must intersect $I_1$ instead.
	
	Analogously, it follows that $g_1$ and $g_2$ intersect $I_1$ and $I_3$. Similarly, $g_3$ and $g_4$ intersect $I_2$ and $I_4$. Combining this with the fact that $f$ restricted to each of the three vertical lines $h_{-\epsilon}$, $h_0$, and $h_\epsilon$ has an increasing section from $0$ to $1$ and a decreasing section from $1$ to $0$, this implies that the four lines $g_1$ to $g_4$ do not cross between $h_{-\epsilon}$ and $h_\epsilon$. Let us focus on the quadrilateral enclosed by $g_1$, $g_2$, $h_{-\epsilon}$ and $h_\epsilon$. By what we know so far, $f$ is constant $0$ on $g_1$, constant $1$ on $g_2$, and linear within this quadrilateral. Since $h_{-\epsilon}$ and $h_\epsilon$ are parallel, this implies that the corresponding two sides of the quadrilateral must have the same length. Thus, the quadrilateral must be a parallelogram. In particular, $g_1$ and $g_2$ are parallel. Similarly, $g_3$ and $g_4$ must be parallel.
	
	Let $s$ be the slope of $g_1$ and $g_2$, and let $t$ be the slope of $g_3$ and $g_4$. To complete the proof, we need to show that all four lines are parallel, that is, $s=t$, and that this slope value is equal to $s_i$ for some $i\in[\ell]$.
	
	Without loss of generality, we can assume that $s\leq t$, otherwise we mirror the gadget along the $x_2$-axis.
	Observe that $s_1\leq s\leq t\leq s_\ell$ because both $g_1$ and $g_2$ intersect $I_3$, and both~$g_3$ and $g_4$ intersect $I_4$.
	
	Let $i^*\coloneqq\max\{i\mid s_i \leq s\}\in[\ell]$. If $i^*=\ell$, then $s=t=s_\ell$ and we are done. Otherwise, consider the data point $\mathbf q_{i^*}^+= \left(\frac{4}{s_{i^*+1}-s_{i^*}},\frac{2(s_{i^*}+s_{i^*+1})}{s_{i^*+1}-s_{i^*}}\right)$, which has label $y=0$.
	
	Let us have a look at what $f$ restricted to the vertical line $h$ through $\mathbf q_{i^*}^+$ looks like. By $s\leq t$, the four lines $g_1$, $g_2$, $g_3$, and $g_4$ intersect $h$ in exactly this order (from bottom to top). This means that these lines do not cross between the $x_2$-axis and $h$. By our insights above, this implies that restricted to $h$, $f$ is zero outside the intersection points with $g_1$ and $g_4$, increases from zero to one between $g_1$ and $g_2$, stays constant $1$ between $g_2$ and $g_3$, and decreases back to $0$ between $g_3$ and $g_4$.
	
	By our choice of $i^*$, we obtain that $s_{i^*+1}>s$. Let us calculate at which $x_2$-coordinate~$g_1$ intersects $h$. This happens at
	\[x_2=-2+\frac{4s}{s_{i^*+1}-s_{i^*}}<-2+\frac{4s_{i^*+1}}{s_{i^*+1}-s_{i^*}}=\frac{4s_{i^*+1}-2(s_{i^*+1}-s_{i^*})}{s_{i^*+1}-s_{i^*}}=\frac{2(s_{i^*}+s_{i^*+1})}{s_{i^*+1}-s_{i^*}}.\]
        Thus, $\mathbf q_{i^*}^+$ lies strictly above the line $g_1$. Since $\mathbf q_{i^*}^+$ has label zero, this must imply that $\mathbf q_{i^*}^+$ does not lie below $g_4$. Looking at the intersection point of $g_4$ with $h$, this means:
	\begin{align*}
		&\quad2+\frac{4t}{s_{i^*+1}-s_{i^*}}\leq \frac{2(s_{i^*}+s_{i^*+1})}{s_{i^*+1}-s_{i^*}}\\
		\Leftrightarrow&\quad 2(s_{i^*+1}-s_{i^*})+4t\leq 2(s_{i^*}+s_{i^*+1})\\
		\Leftrightarrow&\quad t\leq s_{i^*}.
	\end{align*}

	Thus, we obtain $s_{i^*}\leq s \leq t \leq s_{i^*}$, implying that $g_1, g_2, g_3, g_4$ are all parallel and have one of the $\ell$ predefined slopes. This implies that $f$ is the levee $f_{s_{i^*}}$, completing the proof of the lemma.
\end{proof}

\paragraph*{Combining Multiple Selection Gadgets.}
Having constructed and understood a single selection gadget, the next step is to use multiple of these gadgets simultaneously. To this end, we will ``stack multiple selection gadgets upon each other along the $x_2$-axis''. To make this formal, we define a \emph{selection gadget with offset $z$} as the set of data points of a selection gadget as described above, where we add $z$ to all $x_2$-coordinates of the gadget. In other words, the gadget is centered around the point $(0,z)$.

Now, consider the set of data points originating from $m$ selection gadgets with offsets $z_1, \dots, z_m$, each one offering the choice between $\ell_j$ many slopes $s_i^{(j)}$, $i\in[\ell_j]$, $j\in[m]$. Suppose further that we uniformly choose $\epsilon\coloneqq\min_{j\in[m]}\min\Bigl\{\frac{1}{3\lvert s_1^{(j)} \rvert},\frac{1}{3\lvert s_\ell^{(j)} \rvert}\Bigr\}$ for all the gadgets such that the vertical lines $h_{-\epsilon}$, $h_0$, and $h_\epsilon$ with $x_1$-coordinates $-\epsilon$, $0$, and $\epsilon$, respectively, each contain either $9$ or $13$ data points from each gadget. Let $\delta\coloneqq\min_{j\in[m]}\min_{i\in[\ell_j-1]} (s_{i+1}^{(j)}- s_i^{(j)})$ be the smallest difference of two consecutive slopes in the $m$ gadgets. Moreover, let $S\coloneqq\max_{j\in[m]}\max_{i\in[\ell_j]} \lvert s_i^{(j)}\rvert$ be the largest absolute value of all the slopes. In this setting, the following lemma states that fitting all these data points is equivalent to independently choosing one slope for each single gadget and adding up the corresponding levees, provided that the distance of the gadgets is large enough.

\begin{lemma}\label{lem:manygadgets}
	If $z_{j+1}-z_j\geq \frac{8S}{\delta}+6$ for all $j\in[m-1]$, then there are exactly $\prod_{j=1}^{m} \ell_j$ many continuous piecewise linear functions $f\colon \R^2\to\R$ with at most $4m$ breaklines fitting the data points of the $m$ selection gadgets, namely $f(x_1,x_2)=\sum_{j=1}^{m} f_{s_{i_j}^{(j)}}(x_1, x_2-z_j)$ for each choice of indices $i_j\in[\ell_j]$ for each $j\in[m]$.
\end{lemma}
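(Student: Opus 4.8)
The plan is to prove both inclusions. First I would show that each of the $\prod_{j=1}^m\ell_j$ candidate functions $f=\sum_{j=1}^m f_{s_{i_j}^{(j)}}(x_1,x_2-z_j)$ fits all data and has at most $4m$ breaklines; this is the routine direction. By \Cref{obs:levee} each summand carries four breaklines, so their union has at most $4m$. For the fitting I would use that a levee is \emph{supported on a thin stripe}: $f_s(x_1,x_2-z_j)$ vanishes as soon as $\lvert x_2-z_j-sx_1\rvert\geq 2$. A short estimate then shows that at any data point of a foreign gadget $j'\neq j$ the separation hypothesis $z_{j+1}-z_j\geq\frac{8S}{\delta}+6$ forces $\lvert x_2-z_j-sx_1\rvert\geq 2$ for gadget $j$'s levee; the worst case is a point $\mathbf q_i^\pm$, whose coordinates are bounded by $4/\delta$ and $4S/\delta$, so every foreign levee contributes exactly zero there. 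Since gadget $j$'s own levee already fits its own points (the remark preceding \Cref{lem:selectiongadget}), the sum fits everything.

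For the converse, fix an admissible $f$ with at most $4m$ breaklines. The engine is a \emph{budget argument} on the three vertical lines $h_{-\epsilon}$, $h_0$, $h_\epsilon$, which now carry the stacked data of all gadgets. Exactly as in the proof of \Cref{lem:selectiongadget}, the nine points each gadget places on a given line in the pattern $0,0,0,1,1,1,0,0,0$ force at least four breakpoints of $f$ restricted to that line, located in a window of radius $\tfrac{7}{3}$ around the gadget's offset. The separation hypothesis makes these windows pairwise disjoint, so the $m$ gadgets force at least $4m$ breakpoints on each line; as each of the at most $4m$ breaklines is non-vertical (a vertical breakline creates no breakpoint on a vertical line) and meets a vertical line at most once, there are \emph{exactly} $4m$ breaklines, exactly four breakpoints per window on each line, and none outside the windows. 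On $h_0$ the four extra data points additionally pin the cross-section to the exact levee shape, so the $h_0$-breakpoints sit at $z_j-2,z_j-1,z_j+1,z_j+2$ (convex, concave, concave, convex), yielding a bijection between breaklines and these $4m$ points.

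The main obstacle is \emph{confinement}: I must rule out a steep breakline that meets $h_0$ in gadget $j$'s window yet wanders into a neighbouring window on $h_{-\epsilon}$ or $h_\epsilon$, since only then can I isolate four breaklines per gadget and invoke \Cref{lem:selectiongadget} locally. The three crossing heights of any breakline form an arithmetic progression with common difference $\sigma\epsilon$ (with $\sigma$ its slope), and all three must lie in the union of the radius-$\tfrac{7}{3}$ windows. I would establish confinement by induction from the bottom window upward: among the breaklines not yet assigned to lower windows, the one crossing $h_0$ lowest sits at the current window's bottom $z_j-2$; since its two outer crossings cannot drop below the current window and the next window is at distance at least $10$, the progression cannot straddle the gap, which pins $\lvert\sigma\epsilon\rvert$ small and confines all three crossings to window $j$. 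Peeling off window $j$'s four breaklines and repeating gives the claim. With confinement in hand, each gadget's data together with its four private breaklines reproduces on the central lines exactly the situation of \Cref{lem:selectiongadget}: the segment argument forces the two convex and two concave breaklines into parallel pairs with slopes in $[s_1^{(j)},s_\ell^{(j)}]\subseteq[-S,S]$. Only now, knowing the slopes are bounded by $S$, does the band swept by gadget $j$'s breaklines stay disjoint from the foreign bands out at the vertical line through $\mathbf q_{i^*}^{+}$ (whose $x_1$-coordinate is at most $4/\delta$), so the $\mathbf q$-point argument applies verbatim and pins the common slope to a grid value $s_{i_j}^{(j)}$.

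It remains to glue the local conclusions into a global identity. At this point $f$ has precisely the $4m$ levee breaklines, grouped into $m$ parallel quadruples, and near each gadget $f$ agrees with $f_{s_{i_j}^{(j)}}(x_1,x_2-z_j)$. Set $g\coloneqq\sum_{j=1}^m f_{s_{i_j}^{(j)}}(x_1,x_2-z_j)$. Both $f$ and $g$ share the same breakline set, and across each breakline the jump of the directional derivative is constant along the entire line (the convex/concave observation in \Cref{sec:prelims}); since $f$ and $g$ coincide near some gadget, these jumps agree on every breakline. Hence $f-g$ has no breaklines, i.e.\ is globally affine, and as it vanishes on an open set it vanishes identically. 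Therefore $f=g$, which completes the characterization and, combined with the first direction, yields exactly $\prod_{j=1}^m\ell_j$ admissible functions.
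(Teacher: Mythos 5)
Your proposal is correct and follows essentially the same route as the paper's proof: the easy direction via the stripe-support estimate under the separation hypothesis, and the converse via the breakpoint budget on $h_{-\epsilon}$, $h_0$, $h_\epsilon$ (forcing exactly $4m$ non-vertical breaklines with exactly four crossings per window and none outside), bottom-up confinement of breaklines to their windows, and a per-gadget invocation of the \Cref{lem:selectiongadget} machinery including the $\mathbf q$-point argument. The only difference is organizational: the paper packages the converse as an induction on $m$, pinning down the lowest gadget's levee and subtracting it (which cancels its four breaklines --- precisely the jump-cancellation fact you invoke), whereas you unroll that induction into a simultaneous confinement sweep and finish with the global ``equal gradient jumps $\Rightarrow$ $f-g$ affine $\Rightarrow$ $f=g$'' gluing step.
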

\begin{proof}
	We first show that each of these functions does indeed fit all the data points. For this, it is sufficient to show that each levee $f_{s_{i_j}^{(j)}}(x_1, x_2-z_j)$ is $0$ at all the data points $(\bar{x}_1,\bar{x}_2)$ belonging to a selection gadget with index $j'\neq j$. Without loss of generality, we can assume that $z_j=0$. By the definition of the selection gadget and checking all the possible $x_1$-coordinates, we obtain that $\lvert\bar{x}_1\rvert\leq\nicefrac{4}{\delta}$. Moreover, looking at the possible $x_2$-coordinates, we obtain that $\bar{x}_2$ can differ at most by $4+S\cdot\lvert\bar{x}_1\rvert$ from $z_{j'}$, from which we conclude $\lvert\bar{x}_2\rvert\geq \lvert z_{j'} \rvert - 4 - S\cdot\lvert\bar{x}_1\rvert\geq \frac{4S}{\delta}+2$. On the other hand, all points $(x_1,x_2)$ for which the levee $f_{s_{i_j}^{(j)}}(x_1, x_2)$ is nonzero satisfy $\lvert x_2 \rvert < 2 + \lvert s_{i_j} x_1 \rvert \leq 2 + S\lvert x_1 \rvert$. Since $\lvert\bar{x}_2\rvert\geq\frac{4S}{\delta}+2 \geq 2 + S\lvert x_1 \rvert$, it follows that $f_{s_{i_j}^{(j)}}$ must be zero at $(\bar{x}_1,\bar{x}_2)$, completing the proof that all claimed functions fit the $m$ selection gadgets.
	
	It remains to show that all functions $f$ fitting the data points of the $m$ selection gadgets are of the claimed form. We show this by induction on $m$. The base case $m=1$ is given by \Cref{lem:selectiongadget}. Now, let $m\geq 2$ and without loss of generality let $z_1=0$. We will again consider the three vertical lines $h_{-\epsilon}$, $h_0$, and $h_\epsilon$ with $x_1$-coordinates $-\epsilon$, $0$, and $\epsilon$, respectively. Remember that $f$ restricted to each of these three lines is a one-dimensional continuous piecewise linear function with at most $4m$ breakpoints, stemming from breaklines intersecting the respective vertical line. By looking at each individual gadget and arguing as in the proof of \Cref{lem:selectiongadget}, we obtain the following information:
	\begin{itemize}
		\item There are exactly $2m$ convex breaklines, intersecting $h_0$ at the $2m$ points $(0,z_j-2)$ and $(0,z_j+2)$, $j\in[m]$. Note that by our assumptions $z_1=0$ and $z_{j+1}-z_j\geq \frac{8S}{\delta}+6>6$, all these points are distinct, two of them are $(0,-2)$ and $(0,2)$, and all the other $2m-2$ points lie above the horizontal line $x_2=4$.
		\item There are exactly $2m$ concave breaklines, intersecting $h_0$ at the $2m$ points $(0,z_j-1)$ and $(0,z_j+1)$, $j\in[m]$. Again by our assumptions $z_1=0$ and $z_{j+1}-z_j\geq \frac{8S}{\delta}+6>6$, all these points are distinct, two of them are $(0,-1)$ and $(0,1)$, and all the other $2m-2$ points lie above the horizontal line $x_2=5$.
		\item Each of the four segments $I_1$ to $I_4$ corresponding to the selection gadget with index $j=1$ as defined in the proof \Cref{lem:selectiongadget} is intersected by exactly one convex and exactly one concave breakline. There are $4m-4$ further such segments stemming from selection gadgets with index $j>1$, and all of those lie completely above the horizontal line $x_2=6-\nicefrac{7}{3}=\nicefrac{11}{3}$.
	\end{itemize}

	Looking at the breaklines passing through $(0,-2)$ and $(0,-1)$, they must also pass through one of the described $2m$ segments on $h_{-\epsilon}$ and one of the described $2m$ segments on $h_\epsilon$. Since the considered gadget is the lowest one on the $x_2$-axis, the same argument as in the proof of \Cref{lem:selectiongadget} applies, which means that the only way of fulfilling these requirements simultaneously is that these breaklines pass through $I_1$ and $I_3$.
	Once having this, the same argument can be repeated for the breaklines passing through $(0,1)$ and $(0,2)$, making use of the fact that all the $4m-4$ segments not belonging to the considered gadget lie above the $x_2=\nicefrac{11}{3}$-line. Therefore, these breaklines must intersect $h_{-\epsilon}$ and $h_{\epsilon}$ within $I_2$ and $I_4$, respectively.
	
	From this, it follows as in the proof of \Cref{lem:selectiongadget} that the only way to fit the data points of the selection gadget with index $j=1$ is one of the $\ell_1$ levees $f_{s_{i}^{(1)}}$, $i\in[\ell_1]$. Thus, subtracting one of these $\ell_1$ levees from $f$ eliminates four of the $4m$ breaklines. Applying induction to the resulting function and the $m-1$ remaining selection gadgets completes the proof.
\end{proof}

\paragraph*{Global Construction.}

We are now ready to describe the overall reduction.
For a given formula~$F=C_1\wedge C_2\ldots\wedge C_m$ with variables~$v_1,\ldots,v_n$, we construct data points in $\R^2\times\R$ such that they can be fitted exactly with $k=4(m+n)$ ReLUs if and only if~$F$ is a yes-instance of \textsc{POITS}.
Our construction will consist of $m+n$ selection gadgets, namely one for each clause and one for each variable, and $3m$ further data points. Each of the $m$ selection gadgets corresponding to a clause determines which literal of this clause we choose to be true. Each of the $n$ selection gadgets corresponding to a variable determines whether this variable is true or false. The $3m$ remaining data points will ensure that these choices are consistent.
Let $\delta\coloneqq \frac{1}{2m}$. This will be the smallest difference of any two consecutive slopes in any selection gadget we are going to use. Moreover, no absolute value of a slope will be larger than $S\coloneqq 1$. From this, we conclude that, in order to apply \Cref{lem:manygadgets} in the end, we need to maintain a distance of at least $\Delta\coloneqq\frac{8S}{\delta}+6=16m+6$ between the centers of the gadgets.

We start by describing the positions and slopes of the selection gadgets. Compare \Cref{fig:global} for an illustration. Firstly, for each clause $C_j$, $j\in[m]$, we introduce one selection gadget with offset $j\Delta$ (that is, centered at $(0,j\Delta)$) and the three different slopes $s_1^{(j)}\coloneqq(2j-2)\delta - 1$, $s_2^{(j)}\coloneqq(2j-1)\delta-1$, and $s_3^{(j)}\coloneqq2j\delta-1$. Note that all these slopes are contained in $[-1,0]$. The interpretation will be as follows: Choosing the levee with slope $s_r^{(j)}$ for the $j$-th selection gadget corresponds to choosing the $r$-th literal of the $j$-th clause as the one that is set to true. Secondly, for each variable $v_i$, $i\in[n]$, we introduce one selection gadget with offset $-i\Delta$ and the two slopes $-1$ and $1$. Here the interpretation is as follows: choosing the levee with slope $-1$ corresponds to setting the variable to true, while choosing the levee with slope $1$ corresponds to setting the variable to false. Finally, if the $r$-th literal, $r\in[3]$, of clause $C_j$ is $v_i$, then we introduce a data point $\mathbf{p}_{j,r}$ with label $y=1$ at the intersection of the ``center-line'' of the levee with slope $s_r^{(j)}$ corresponding to the selection gadget for $C_j$ (that is, the line $x_2=\Delta j + s_r^{(j)}x_1$) and the ``center-line'' of the levee with slope $1$ corresponding to the selection gadget of $v_i$ (that is, the line $x_2=-\Delta i + x_1$). Thus, $\mathbf{p}_{j,r}\coloneqq (\frac{\Delta (i+j)}{1-s_r^{(j)}},\frac{\Delta (i+j)}{1-s_r^{(j)}}-\Delta i)$.

\newcommand{\linesandgadgets}{
	\footnotesize
	\clip(-.7,-5.7) rectangle (8.7,3.7);
	\foreach \jj in {1,2,3}{
		\foreach \rr in {0,1,2}{
			\draw[color=black!\topp] (-1, 2/3*\jj + 1/6*\rr + 1) -- (9, 4*\jj - 3/2*\rr - 9);
		}
		\node[rectangle,fill] at (0,\jj) {};
		\node at (.5,1.12*\jj) {$C_\jj$};
	}
	\foreach \ii in {1,...,5}{
		\draw[color=black!\topp] (-1,-\ii-1) -- (9, -\ii + 9);
		\draw[color=black!\topp] (-1,-\ii+1) -- (9, -\ii - 9);
		\node[rectangle,fill] at (0,-\ii) {};
		\node at (.5,-\ii) {$v_\ii$};
	}
}
\newcommand{\reductionpoints}{
	\draw[dotted, ultra thick] (0,-6) -- (0,4);
	\datapoint{\topp}{3}{-2}
	\datapoint{\topp}{30/11}{-4+30/11}
	\datapoint{\topp}{12/5}{-3+12/5}
	\datapoint{\topp}{18/5}{-4+18/5}
	\datapoint{\topp}{10/3}{-3+10/3}
	\datapoint{\topp}{3}{1}
	\datapoint{\topp}{6}{1}
	\datapoint{\topp}{30/7}{-2+30/7}
	\datapoint{\topp}{4}{3}
}

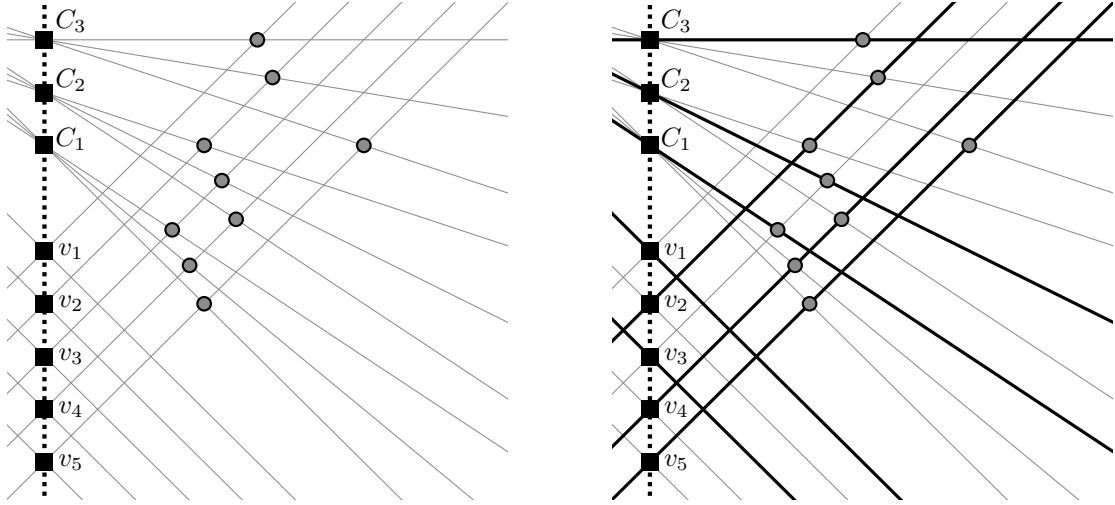
\begin{figure}[tb]
	\centering
	\hfill
	\begin{tikzpicture}[scale=0.7]
		\linesandgadgets
		\reductionpoints
	\end{tikzpicture}
	\hfill\hfill
	\begin{tikzpicture}[scale=0.7]
		\linesandgadgets
		\draw[very thick] (-1,3) -- (9,3);
		\draw[very thick] (-1,2.5) -- (9,-2.5);
		\draw[very thick] (-1,5/3) -- (9,-5);
		\draw[very thick] (-1,0) -- (9,-10);
		\draw[very thick] (-1,-3) -- (9,7);
		\draw[very thick] (-1,-2) -- (9,-12);
		\draw[very thick] (-1,-5) -- (9,5);
		\draw[very thick] (-1,-6) -- (9,4);
		\reductionpoints
	\end{tikzpicture}
	\hfill{}
	\caption{Global construction layout for the reduction from \textsc{POITS} to \RELU{\mathcal{L}}. The figure shows the construction for the instance $(v_5\vee v_4 \vee v_3)\wedge (v_4\vee v_3 \vee v_2)\wedge (v_5\vee v_2 \vee v_1)$. The vertical dotted line is the $x_2$-axis along which we place all the selection gadgets. Each gadget is depicted with a black square. Each solid gray line depicts one possible levee. Each gray circle depicts a data point $\mathbf p_{j,r}$ with label one. The picture on the right additionally shows one possible solution to the given instance. Indeed, choosing levees corresponding to the solid black lines selects exactly one levee per selection gadget and exactly one levee passing through each of the nine additional data points. This corresponds to the truth assignment $v_1=v_3=\text{true}$ and $v_2=v_4=v_5=\text{false}$.}
	\label{fig:global}
\end{figure}

This finishes the construction. Before we prove \Cref{thm:d=2} using this construction, we show the following useful lemma.

\begin{lemma}\label{lem:nonzerolevees}
	For each $j\in[m]$ and $r\in[3]$, there are exactly two out of the $3m+2n$ possible levees defined by the selection gadgets which are non-zero at $\mathbf p_{j,r}$, namely $f_{s_{r}^{(j)}}(x_1, x_2-j\Delta)$ and $f_1(x_1, x_2+i\Delta)$, where $v_i$ is the $r$-th literal in $C_j$.
\end{lemma}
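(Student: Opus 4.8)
The plan is to reduce everything to a single scalar estimate per levee. Recall from \Cref{def:levee} that a shifted levee $f_s(x_1,x_2-z)$ is nonzero at a point $(x_1,x_2)$ if and only if $|x_2-sx_1-z|<2$. Writing $\mathbf p_{j,r}=(X,Y)$ with $X=\frac{\Delta(i+j)}{1-s_r^{(j)}}$ and $Y=X-\Delta i$, I would attach to each of the $3m+2n$ levees $L$ (with slope $s$ and offset $z$) the number $g_L\coloneqq Y-sX-z$, so that $L$ is nonzero at $\mathbf p_{j,r}$ exactly when $|g_L|<2$. The lemma then becomes the statement that $g_L=0$ for the two claimed levees and $|g_L|\ge2$ for every other levee.

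The two claimed levees are immediate: by construction $\mathbf p_{j,r}$ lies on the centre-line $x_2=j\Delta+s_r^{(j)}x_1$ of $f_{s_r^{(j)}}(x_1,x_2-j\Delta)$ and on the centre-line $x_2=-i\Delta+x_1$ of $f_1(x_1,x_2+i\Delta)$, so $g_L=0$ there (the value is in fact $1$). The easy remaining cases I would dispatch by direct substitution. For the variable gadgets: the other slope at gadget $i$ gives $g_L=2X$, a slope-$1$ levee at $i''\ne i$ gives $g_L=(i''-i)\Delta$, and a slope-$(-1)$ levee at $i''\ne i$ gives $g_L=2X+(i''-i)\Delta$; using $X\ge\frac{\Delta(i+j)}{2}\ge\Delta$ one checks each has $|g_L|\ge\Delta>2$. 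For the two other slopes of gadget $j$ itself, a short calculation yields $g_L=\frac{s_r^{(j)}-s_{r'}^{(j)}}{1-s_r^{(j)}}\,\Delta(i+j)$, whose absolute value is at least $\delta\Delta=8+\frac{3}{m}>2$ since consecutive slopes differ by at least $\delta$, $1-s_r^{(j)}\le2$, and $i+j\ge2$.

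The real work is the levees coming from a different clause gadget $j''\ne j$, where both slope and offset differ and where \emph{a priori} $\mathbf p_{j,r}$ could land in such a stripe. Here I would write $g_L=\Delta\,Q$ with $Q=(j-j'')+(i+j)\frac{s_r^{(j)}-s}{1-s_r^{(j)}}$ and exploit integrality. Every clause slope has the form $(\text{integer})\,\delta-1$ with $\delta=\frac{1}{2m}$, so all slope differences are integer multiples of $\delta$, and $2m(1-s_r^{(j)})=4m-(2j-3+r)$ is itself an integer; hence $2m(1-s_r^{(j)})\,Q$ is an integer. If I can show this integer is nonzero, then $|Q|\ge\frac{1}{2m(1-s_r^{(j)})}\ge\frac{1}{4m}$ and therefore $|g_L|\ge\frac{\Delta}{4m}=4+\frac{3}{2m}>2$, which is exactly what is needed.

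It thus remains to rule out $Q=0$, i.e.\ that the centre-lines of both clauses and of the variable pass through $\mathbf p_{j,r}$ at once. Setting $a\coloneqq2j-3+r$, $b\coloneqq2j''-3+r'$, and $p\coloneqq j-j''\ne0$, clearing denominators turns $Q=0$ into the Diophantine identity $p\,(4m-a+2i+2j)=(i+j)(r'-r)$. Its left-hand side has absolute value at least $4m-a+2i+2j>2(i+j)$ (using $a=2j-3+r\le2m$, so $4m-a>0$), while the right-hand side has absolute value at most $2(i+j)$ because $|r'-r|\le2$; the two cannot be equal, so $Q\ne0$. I expect this final magnitude comparison to be the crux: it is precisely where the chosen spacing $\Delta=\frac{8S}{\delta}+6$ and the slope grid $\delta=\frac{1}{2m}$ pay off, guaranteeing that distinct clause gadgets can never conspire to place a spurious third levee through any point $\mathbf p_{j,r}$.
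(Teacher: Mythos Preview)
Your proof is correct. The reduction to the scalar quantity $g_L=Y-sX-z$ and the case analysis are sound, and the Diophantine argument for the different-clause case goes through: clearing denominators by $4m-a=2m(1-s_r^{(j)})$ indeed makes $Q$ integral, and the magnitude comparison $|p|(4m-a+2i+2j)>2(i+j)\ge |(i+j)(r'-r)|$ correctly rules out $Q=0$, yielding $|g_L|\ge\Delta/(4m)>2$.

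The paper, however, handles the crucial case $j''\ne j$ by a much simpler \emph{monotonicity} observation that you missed: the clause slopes $s_{r'}^{(j'')}=(2j''-3+r')\delta-1$ are monotone in $j''$, so for $j''>j$ every slope of gadget $j''$ is at least $s_r^{(j)}$ and the offset is at least $(j+1)\Delta$; since $X>0$, the centre-line of that levee lies above $\mathbf p_{j,r}$ by at least $\Delta$ vertically (and symmetrically below for $j''<j$). This gives $|g_L|\ge\Delta$ in one line, without any integrality argument. Your Diophantine route works and is robust (it would survive a less carefully ordered slope grid), but it is considerably heavier machinery than needed here and yields the weaker bound $\Delta/(4m)$. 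The paper's argument exploits exactly the fact that the clauses are laid out so that both slopes and offsets increase together, which makes the ``real work'' you anticipated disappear.
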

\begin{proof}
	Since $\mathbf p_{j,r}$ is the intersection point of the center-lines of the two named levees, it suffices to show that no other levee is non-zero at this point.
	
	Let us start by reminding ourselves that a levee with offset $z$ and slope $s$ is non-zero only for points within a stripe of ``vertical width 4'', that is, for points $(x_1,x_2)$ with $sx_1+z-2< x_2 <sx_1+z+2$.
	
	Now we focus on levees belonging to other clauses $C_{j'}$ with $j'\neq j$. If $j'>j$, then the slope will be at least $s_{r}^{(j)}$ and the offset will be at least $(j+1)\Delta$. Since $\mathbf p_{j,r}$ lies on the right-hand side of the $x_2$-axis and on the center-line of a levee with slope exactly $s_{r}^{(j)}$ and offset exactly $j\Delta$, we obtain that $\mathbf p_{j,r}$ lies below the center-line of the considered levee with a vertical distance of at least $\Delta>2$, implying that the levee must vanish at $\mathbf p_{j,r}$. In the case $j'<j$ it follows similarly with $\mathbf p_{j,r}$ lying above instead of below the considered levee.
	
	Next, let us focus on the two levees belonging to the same clause $C_j$ but to the $r'$-th literal with $r'\neq r$. The slope of such a levee differs by at least $\delta$ from $s_{r}^{(j)}$, while the offset is exactly $j\Delta$. This implies that $\mathbf p_{j,r}$ has a vertical distance of at least $\delta\frac{\Delta (i+j)}{1-s_r^{(j)}}\geq\delta\frac{2\Delta}{2}=\delta\Delta>8>2$ from the center-line of the considered levee.
	
	Next, let us focus on a levee with slope $1$ belonging to a variable $v_{i'}$ with $i'\neq i$. Since $\mathbf p_{j,r}$ lies on the center-line of the levee with slope $1$ belonging to $v_i$, these levees are parallel, and have vertical distance at least $\Delta>2$, this case is settled, too.
	
	Finally, let us focus on a levee  with slope $-1$ belonging to any variable. Such a levee has an offset of at most $-\Delta$ and its slope is at most $s_{r}^{(j)}$. Since $\mathbf p_{j,r}$ lies on the center-line of the levee with offset $j\Delta$ and slope $s_{r}^{(j)}$, this implies that its vertical distance to the considered levee is at least $2\Delta>2$, finishing the proof.
\end{proof}

Finally, we are ready to prove the main theorem.

\begin{proof}[Proof of \Cref{thm:d=2}]
	We reduce from \textsc{POITS} and construct an instance of \RELU{\mathcal{L}} with $k=4(m+n)$ and $\gamma=0$ as described above. Note that, overall, we introduce~$O(m+n)$ points with rational coordinates (with~$\poly(m,n)$ bits) which are polynomial-time computable.
	
	To prove equivalence between the \textsc{POITS} instance and the constructed  instance, let us first assume that the \textsc{POITS} instance is a yes-instance. Let $T\subseteq[n]$ be a set of indices such that the truth assignment with $v_i=\text{true}$ for $i\in T$ and $v_i=\text{false}$ for $i\notin T$ sets exactly one literal per clause to true. Let $r_j\in\{1,2,3\}$ denote which of the three literals is set to true in clause $C_j$ by this assignment. We claim that the following function, which is a sum of $m+n$ levees and thus realizable with $k=4(m+n)$ ReLUs using \Cref{obs:levee}, exactly fits all the constructed data points:
	\begin{equation}\label{eq:leveesum}
		f(x_1,x_2) =
			\sum_{i\in T} f_{-1} (x_1, x_2+i\Delta)
		+	\sum_{i\notin T} f_{1} (x_1, x_2+i\Delta)
		+	\sum_{j=1}^{m} f_{s_{r_j}^{(j)}}(x_1, x_2-j\Delta).
	\end{equation}
	By \Cref{lem:manygadgets}, $f$ fits all data points belonging to the selection gadgets. It remains to show that $f$ attains value $1$ at all the data points $\mathbf p_{j,r}$, $j\in[m]$, $r\in[3]$. To see this, fix such $j$ and $r$ and let the $r$-th literal in $C_j$ be $v_i$. By \Cref{lem:nonzerolevees}, the only two levees which can potentially be non-zero at $\mathbf p_{j,r}$ are $f_{s_{r}^{(j)}}(x_1, x_2-j\Delta)$ and $f_1(x_1, x_2+i\Delta)$. If $r=r_j$, then $v_i=\text{true}$ and the former levee attains value $1$ while the latter levee attains value $0$ at $\mathbf p_{j,r}$. Otherwise, if $r\neq r_j$, then $v_i=\text{false}$ and the former levee attains value $0$ while the latter levee attains value $1$ at $\mathbf p_{j,r}$. In both cases, the data point is fitted correctly.
	
	Now suppose conversely that the constructed data points can be precisely fitted with a function $f$ representable with $k=4(m+n)$ ReLUs. By \Cref{lem:manygadgets}, $f$ must be of the form \eqref{eq:leveesum} for some set $T\subseteq[n]$ and some values $r_j\in[3]$ for all $j\in[m]$. We claim that setting $v_i=\text{true}$ for $i\in T$ and $v_i=\text{false}$ for $i\notin T$ sets exactly one literal per clause to true. To see this, fix $j\in[m]$ and $r\in[3]$ and let $v_i$ be the $r$-th literal of~$C_j$. Using \Cref{lem:nonzerolevees} again, observe that exactly one of the two levees $f_{s_{r}^{(j)}}(x_1, x_2-j\Delta)$ and $f_1(x_1, x_2+i\Delta)$ must belong to the sum \eqref{eq:leveesum} because the data point $\mathbf p_{j,r}$ has label one.
	In other words, it holds that either $r=r_j$ (implying $i\in T$) or~$i\not\in T$.
	This implies that, for each $j\in[m]$, the defined truth assignment sets exactly the $r_j$-th literal of $C_j$ to true, finishing the overall proof.
\end{proof}

\section{W[1]-Hardness for Four ReLUs}

We show that fixed-parameter tractability with respect to~$d$ is unlikely even for target error zero and four ReLUs. In fact, we prove a running time lower bound of~$n^{\Omega(d)}$ based on the~ETH.

\begin{theorem}\label{thm:k=4}
  \RELU{\mathcal{L}} with~$k=4$ and $\gamma=0$ is W[1]-hard with respect to~$d$ and not solvable in $\rho(d)n^{o(d)}\poly(L)$ time (where~$L$ is the input bit-length) for any function~$\rho$ assuming the ETH.
\end{theorem}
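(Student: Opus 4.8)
The plan is to reduce from \textsc{Clique} parameterized by the clique size~$k'$, which by the ETH cannot be solved in $\rho(k')n^{o(k')}$ time. The target dimension~$d$ of the constructed \RELU{\mathcal{L}} instance should be a function of~$k'$ (ideally $d=\Theta(k')$), so that an algorithm running in $\rho(d)n^{o(d)}\poly(L)$ time would translate into a $\rho(k')n^{o(k')}$ algorithm for \textsc{Clique}, contradicting the ETH and simultaneously yielding W[1]-hardness with respect to~$d$. The central design constraint is that we are only allowed \emph{four} ReLUs and must insist on \emph{exact} fitting ($\gamma=0$). With four ReLUs the representable functions are severely limited: following \Cref{obs:levee} and the geometric discussion in \Cref{sec:prelims}, four ReLUs with the appropriate sign pattern yield exactly a levee-type function, i.e.\ a single ``ridge'' that is $0$ outside a slab, rises to a plateau, and falls back to~$0$. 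The key idea is therefore to encode the combinatorial choice of a $k'$-clique into the \emph{position and orientation of a single levee} (a slab bounded by two convex and two concave breaklines) in~$\R^d$, exploiting that a high-dimensional levee still has only four breaklines regardless of~$d$.

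First I would set up the encoding so that each vertex of the graph corresponds to a point, and each candidate clique corresponds to a choice of the slab's supporting hyperplane in~$\R^d$. Concretely, I would place ``selection-gadget''-style data points that force the fitting function to be a levee with a prescribed plateau, so that the only remaining degrees of freedom are the direction of the breaklines, analogous to how \Cref{lem:selectiongadget} pins down a levee up to its slope. The dimension~$d$ buys us enough directional freedom to encode a choice of~$k'$ vertices: I would use a coordinate block for each of the $k'$ slots in the clique, and arrange vertex-gadget points so that placing the plateau to pass through the ``true'' points of a consistent vertex selection is the only way to achieve zero error. Edge-consistency (the requirement that the selected $k'$ vertices are pairwise adjacent) would be enforced by a collection of constraint points analogous to the points~$\mathbf p_{j,r}$ in \Cref{thm:d=2}: a non-edge would create an incompatible point that no single levee can fit to value~$1$ while respecting the plateau constraints, so zero error is attainable if and only if the selected vertices form a clique.

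Next I would prove the two directions of correctness. For the forward direction, given a $k'$-clique I would exhibit an explicit levee (four ReLUs, sign pattern $+,-,-,+$ as in \Cref{obs:levee}) whose breaklines are oriented according to the clique's vertices and verify it fits every data point with value equal to its label; this is the routine ``yes-instance $\Rightarrow$ fit'' calculation. For the converse, I would argue that \emph{any} exact fit with four ReLUs must be a levee (invoking the breakline-counting argument from \Cref{sec:prelims}: four ReLUs give at most four breaklines, and the forced levee shape of the gadget points uses all four, so each breakline is induced by exactly one neuron with the convex/concave type matching its sign), and then that a levee fitting all points decodes to a consistent choice of $k'$ mutually adjacent vertices. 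The forced-shape step reuses the one-dimensional cross-section reasoning of \Cref{lem:selectiongadget}.

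The main obstacle I expect is encoding an \emph{unbounded} combinatorial choice (choosing $k'$ out of $n$ vertices, and doing so consistently across slots and with edge constraints) using a function with only four breaklines. In the two-dimensional proof the limited expressive power of four breaklines was harnessed via the discrete \emph{slope} of a single levee, and multiple gadgets were stacked to get many independent choices (\Cref{lem:manygadgets}); here we cannot stack, since $k=4$ admits only one levee, so all the combinatorial information must be packed into the orientation of that single slab in~$\R^d$. The delicate part is therefore geometric: designing point sets so that the admissible breakline normals are in bijection with cliques, ensuring the required $\epsilon$-perturbation and separation inequalities (the analogues of $\epsilon\le\min\{\tfrac{1}{3|s_1|},\tfrac{1}{3|s_\ell|}\}$ and the gadget-distance bound) all hold simultaneously in high dimension with polynomially-bounded bit-length, and verifying that a non-clique genuinely forces a strictly positive error rather than merely making fitting awkward. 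Keeping $d=O(k')$ while still separating all $\binom{n}{2}$ potential edges is the crux that makes the ETH lower bound of $n^{\Omega(d)}$ come out tight.
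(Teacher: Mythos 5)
Your plan diverges fundamentally from the paper's proof, and it leaves a genuine gap at its center. The paper does not reduce from \textsc{Clique} and, crucially, never forces the fitting function to be a levee. It instead reduces from \textsc{2-Hyperplane Separability}, which \citet{GKR09} already proved W[1]-hard with respect to~$d$ together with an $m^{o(d)}$-type ETH lower bound; the reduction then only has to translate ``separation by two hyperplanes'' into ``exact fit by four ReLUs''. In the forward direction the constructed solution is an upward unit step in direction~$\mathbf{h}_1$ plus a downward unit step in direction~$\mathbf{h}_2$ --- two in general \emph{non-parallel} step directions, not a levee. In the backward direction no global structure is forced on the four ReLUs at all: the auxiliary points $\mathbf{r}_{\mathbf{q}\mathbf{p}}$ (label $1$) and $\mathbf{s}_{\mathbf{q}\mathbf{p}}$ (label $0$) force at least one concave and one convex breakpoint on every open segment $\mathbf{q}\mathbf{p}$, and since among four ReLUs one of the two sign classes $I^+,I^-$ has at most two elements, that smaller class already yields two hyperplanes strictly separating $Q$ from~$P$. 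This counting argument is what makes the converse short and avoids any shape-forcing gadgetry.

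Your proposal instead requires (i) forcing an arbitrary exact fit by four ReLUs in $\R^d$ to be a single levee, and (ii) designing point sets whose admissible levee orientations are in bijection with $k'$-cliques. Step (i) rests on a false premise as stated: four ReLUs with sign pattern $+,-,-,+$ do \emph{not} ``yield exactly a levee-type function''; the four hyperplanes can be in general position (the paper's own yes-instance solution is a two-direction staircase), and all four are parallel only in a degenerate special case, so the levee shape would have to be enforced by high-dimensional gadgets that you do not construct, and it is unclear this can be done while leaving the orientation free enough to carry the encoding. Step (ii) is the real heart of the matter: with $\gamma=0$ there is no error budget, so unlike the $k=1$, $\gamma>0$ reduction of \citet{FHN22} you cannot encode combinatorics by choosing which points to misfit; you would need label-$0$ points that are exactly fittable under every clique-consistent orientation yet unfittable under every non-clique orientation, together with discretization gadgets keeping $d=O(k')$ and polynomial bit-length. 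That amounts to re-proving, essentially from scratch, a GKR09-style geometric hardness theorem for ``separation by a slab'', a result you neither cite nor establish. Since your own writeup flags exactly this as ``the crux'' and leaves it unresolved, what you have is a plausible research plan whose missing piece is precisely where all the difficulty lies.
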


We prove \Cref{thm:k=4} with a parameterized reduction from the \textsc{2-Hyperplane Separability} problem.

\problemdef{2-Hyperplane Separability}
{Two point sets~$Q$ and~$P$ in $\R^d$.}
{Are there two hyperplanes that strictly separate~$Q$ and~$P$?}

Here, two hyperplanes \emph{strictly separate}~$Q$ and~$P$ if, for every pair~$(\mathbf{q},\mathbf{p})\in Q\times P$, the open line segment~$\mathbf{qp}$ is intersected by at least one hyperplane and no point from~$Q\cup P$ is contained in any of the two hyperplanes.
\citet{GKR09} showed that this problem is W[1]-hard with respect to~$d$ and not solvable in $\rho(d)m^{o(d)}\poly(L)$ time assuming the ETH (where~$m\coloneqq |Q\cup P|$ and~$L$ is the instance size).
In fact, their proof shows that if there is a solution, then there is a solution where~$Q$ lies entirely in one region of
the hyperplane arrangement and the points in~$P$ lie only in the two neighboring regions.
Formally, if the two hyperplanes are defined by~$\mathbf{h}_i \cdot \mathbf{x}+o_i=0$ for $\mathbf{h}_i\in\R^d$, $o_i\in\R$, $i\in[2]$, then (without loss of generality) we can assume that the following holds:
\begin{align}
  &\forall \mathbf{q}\in Q : \mathbf{h}_1 \cdot \mathbf{q} + o_1 > 0 > \mathbf{h}_2 \cdot \mathbf{q} + o_2\label{eqn:q}\\
  &\forall \mathbf{p}\in P : \sgn(\mathbf{h}_1 \cdot \mathbf{p} + o_1) = \sgn(\mathbf{h}_2 \cdot \mathbf{p} + o_2)\label{eqn:p}
\end{align}
Moreover, a closer inspection of their reduction shows that one can assume that the hyperplanes have distance at least $\epsilon\coloneqq m^{-3}$ to each input point\footnote{The critical points in the reduction are the constraint points~$q_{ij}^{uv}$ which are separated from the points $p_{iu_i},p_{i\overline{u}_i},p_{ju_j},p_{j\overline{u}_j}$ by some translation of the hyperplane $H(u_1,\ldots,u_k)$ towards the origin.
  The distance of any $q_{ij}^{uv}$ to~$H(u_1,\ldots,u_k)$ is at least~$2\sin^3(\pi/m)\ge 2m^{-3}$ in one dimension.}. That is, we can assume
\begin{align}
  \forall \mathbf{x}\in Q\cup P, i\in[2] : \frac{|\mathbf{h}_i\cdot \mathbf{x} +o_i|}{\Vert\mathbf{h}_i\Vert} > \epsilon.\label{eqn:eps}
\end{align}
We will make use of these assumptions in the following proof.

\begin{proof}[Proof of \Cref{thm:k=4}]
  Let~$(Q,P)$ be an instance of (restricted)~\textsc{2-Hyperplane Separability} and let~$m\coloneqq|Q\cup P|$ and~$\epsilon\coloneqq m^{-3}$.
  We construct the instance~$(X\subseteq \R^{d+1},k\coloneqq 4,\gamma\coloneqq 0)$ of \RELU{\mathcal{L}}, where~$X$ contains the following points:
  \begin{itemize}
    \item $(\mathbf{q},1)$ for each $\mathbf{q}\in Q$,
    \item $(\mathbf{p},0)$ for each $\mathbf{p}\in P$,
    \item $(\mathbf{r}_\mathbf{qp}\coloneqq (1-\delta)\mathbf{q}+\delta \mathbf{p} ,1)$ and~$(\mathbf{s}_\mathbf{qp}\coloneqq \delta\mathbf{q}+(1-\delta)\mathbf{p},0)$ for each $(\mathbf{q},\mathbf{p})\in Q\times P$,
      where $\delta\coloneqq \epsilon(2\Vert\mathbf{q}-\mathbf{p}\Vert)^{-1}$.
    \end{itemize}
    Note that $\mathbf{r}_\mathbf{qp}$ ($\mathbf{s}_\mathbf{qp}$) lies on the line segment~$\mathbf{qp}$ at distance~$\nicefrac\epsilon 2$ to~$\mathbf{q}$ ($\mathbf{p}$).
  Overall, we construct $n\coloneqq|X|\in O(m^2)$ points, which can be done in polynomial time.

  For the correctness, assume first that there are two hyperplanes~$\mathcal{H}_i$, $i\in[2]$, defined by~$\mathbf{h}_i\cdot \mathbf{x}+ o_i=0$ (wlog $\Vert \mathbf{h}_i\Vert=1$) that strictly separate~$Q$ and~$P$ and satisfy~(\ref{eqn:q})--(\ref{eqn:eps}).
  
  A solution for~$(X,4,0)$ can then be constructed as follows (see also \Cref{fig:2HyperplaneExample}):
  We use two ReLUs realizing an ``upward step'' of height~1 (with slope~$\beta\coloneqq \nicefrac{4}{\epsilon}$) in the direction of~$\mathbf{h}_1$.
  That is, we set
  \begin{align*}
    \mathbf{w}_1&\coloneqq \beta\mathbf{h}_1, &&b_1\coloneqq \beta o_1, &&a_1\coloneqq 1,\\
    \mathbf{w}_2&\coloneqq \beta\mathbf{h}_1, &&b_2\coloneqq \beta o_1-1, &&a_2\coloneqq -1.
  \end{align*}
  \noindent
  Additionally, we use two ReLUs realizing a ``downward step'' of height~1 (with slope~$-\beta$) in the direction of~$\mathbf{h}_2$, that is,
  \begin{align*}
    \mathbf{w}_3&\coloneqq \beta\mathbf{h}_2, &&b_3\coloneqq \beta o_2, &&a_3\coloneqq -1,\\
    \mathbf{w}_4&\coloneqq \beta\mathbf{h}_2, &&b_4\coloneqq \beta o_2-1, &&a_4\coloneqq 1.
  \end{align*}

  Let~$\mathcal{W}_i$ be the hyperplane defined by~$\mathbf{w}_i\cdot\mathbf{x}+b_i=0$ for~$i\in[4]$.
  Note that $\mathcal{W}_1=\mathcal{H}_1$ and~$\mathcal{W}_3=\mathcal{H}_2$.
  Note further that~$\mathcal{W}_2$ is parallel to~$\mathcal{W}_1$ at distance~$\beta^{-1}=\nicefrac{\epsilon}{4}$ and~$\mathcal{W}_4$ is parallel to~$\mathcal{W}_3$ at distance~$\nicefrac{\epsilon}{4}$.
  
  To verify that all data points are exactly fitted, consider first a point~$\mathbf{q}\in Q$.
  From~(\ref{eqn:q}) and~(\ref{eqn:eps}), we obtain
  \begin{align*}
    &\mathbf{w_1}\cdot\mathbf{q}+b_1 = \beta(\mathbf{h}_1\cdot \mathbf{q} +o_1) > 0,\\
    &\mathbf{w_2}\cdot\mathbf{q}+b_2 = \beta(\mathbf{h}_1\cdot \mathbf{q} +o_1-\beta^{-1}) > \beta\epsilon-1 > 0,\\
    &\mathbf{w_3}\cdot\mathbf{q}+b_3 = \beta(\mathbf{h}_2\cdot \mathbf{q} +o_2) < 0,\\
    &\mathbf{w_4}\cdot\mathbf{q}+b_4 = \beta(\mathbf{h}_2\cdot \mathbf{q} +o_2-\beta^{-1}) < 0.
  \end{align*}
  From the above inequalities, it follows
  \begin{align*}
    \phi(\mathbf{q})
    =\beta(\mathbf{h}_1\cdot\mathbf{q}+o_1)-\beta(\mathbf{h}_1\cdot\mathbf{q}+o_1-\beta^{-1}) =1.
  \end{align*}
  Now consider a point~$\mathbf{r}_\mathbf{qp}$ and note that, for each~$\mathcal{W}_i$, $\mathbf{r}_\mathbf{qp}$ lies in the same half-space as~$\mathbf{q}$ since it has distance~$\nicefrac{\epsilon}{2}$ to~$\mathbf{q}$ which has distance at least~$\frac{3}{4}\epsilon$ to~$\mathcal{W}_i$ (by (\ref{eqn:eps})).
  Thus,
  \begin{align*}
    \phi(\mathbf{r}_\mathbf{qp}) =\beta(\mathbf{h}_1\cdot\mathbf{r}_\mathbf{qp}+o_1)-\beta(\mathbf{h}_1\cdot\mathbf{r}_\mathbf{qp}+o_1-\beta^{-1}) =1.
  \end{align*}
  
  Next, consider a point~$\mathbf{p}\in P$. Using (\ref{eqn:p}) and (\ref{eqn:eps}), one easily verifies that
  \begin{align*}
   \sgn(\mathbf{w_1}\cdot\mathbf{p}+b_1) = \sgn(\mathbf{w_2}\cdot\mathbf{p}+b_2) = \sgn(\mathbf{w_3}\cdot\mathbf{p}+b_3) = \sgn(\mathbf{w_4}\cdot\mathbf{p}+b_4).
  \end{align*}
  Hence, $\phi(\mathbf{p})=0$ clearly holds if all the above signs are negative. If all signs are positive, then
  \[\phi(\mathbf{p})=\beta(\mathbf{h}_1\cdot \mathbf{p} +o_1) - \beta(\mathbf{h}_1\cdot \mathbf{p} +o_1-\beta^{-1})-\beta(\mathbf{h}_2\cdot \mathbf{p} +o_2)+\beta(\mathbf{h}_2\cdot \mathbf{p} +o_2-\beta^{-1})\\
    =0.\]
  Finally, any point~$\mathbf{s}_\mathbf{qp}$ analogously lies in the same half-space as~$\mathbf{p}$ for each~$\mathcal{W}_i$, which also implies~$\phi(\mathbf{s}_\mathbf{qp})=0$.
  Thus, all points are correctly fitted.
  
  Conversely, assume that the points in~$X$ can be exactly fitted by~$\phi$ realized by four ReLUs with values $\mathbf{w}_i,b_i,a_i$, $i\in[4]$.
  Let~$I^+\coloneqq \{i\in[4]\mid a_i=1\}$ and $I^-\coloneqq \{i\in[4]\mid a_i=-1\}$.
  
  Consider an arbitrary line segment $\mathbf{qp}$ for $(\mathbf{q},\mathbf{p})\in Q\times P$.
  Clearly, the points~$(\mathbf{q},1)$, $(\mathbf{r}_\mathbf{qp},1)$ and~$(\mathbf{s}_\mathbf{qp},0)$ on this line segment cannot all lie on the same piece of~$\phi$.
  Hence, $\phi$ must have a concave breakpoint at some point on the open segment between~$\mathbf{q}$ and~$\mathbf{p}$.
  That is, there must be a ReLU~$i\in I^-$ such that the hyperplane defined by~$(\mathbf{w}_i,b_i)$ intersects the open line segment $\mathbf{qp}$ and does not contain~$\mathbf{q}$ or~$\mathbf{p}$.
  Analogously, the points $(\mathbf{p},0)$, $(\mathbf{s}_\mathbf{qp},0)$ and~$(\mathbf{r}_\mathbf{qp},1)$ enforce a convex breakpoint, that is, a ReLU $j\in I^+$ with a hyperplane~$(\mathbf{w}_j,b_j)$ also intersecting the open line segment $\mathbf{qp}$ and not containing~$\mathbf{q}$ or~$\mathbf{p}$.
  
  To sum up, every open line segment~$\mathbf{q}\mathbf{p}$ is intersected by at least two hyperplanes (not containing~$\mathbf{q}$ or~$\mathbf{p}$), one corresponding to a ReLU $i\in I^{-1}$ and one corresponding to a ReLU $j\in I^+$.
  Since there are only four ReLUs, it follows that $\min(|I^+|,|I^-|)\le 2$.
  That is, we obtain a solution for \textsc{2-Hyperplane Separability} by picking either all hyperplanes corresponding to~$I^+$ or all hyperplanes corresponding to~$I^-$.
  
  This finishes the reduction.
  Note that since the dimension of the input data points in our constructed instance is~$d$,
  any algorithm solving \RELU{\mathcal{L}} in time~$\rho(d)n^{o(d)}\poly(L)$ would imply an algorithm running in time~$\rho(d)m^{o(d)}\poly(L')$ for \textsc{2-Hyperplane Separability} contradicting the ETH.
\end{proof}

\newcommand{\auxpoint}[3]{
	\node[circle,inner sep=1.4,draw,fill=black!#1] at (#2,#3) {};
}

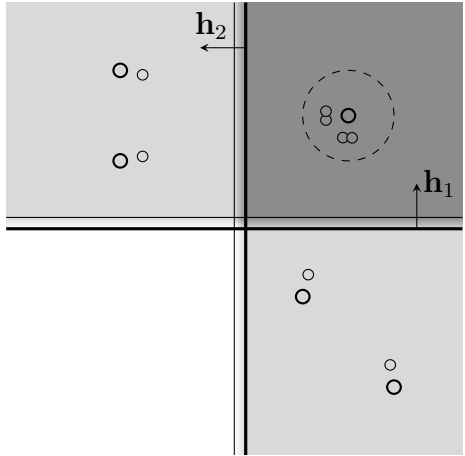
\begin{figure}[t]
  \centering
  \begin{tikzpicture}[scale=.6]
    \def\eps{1};
    
    \fill[top color=black!\topp, bottom color=black!\low] (\eps/4,0) rectangle (5,\eps/4);
    \fill[top color=black!\low, bottom color=black!\bottom] (0,0) rectangle (-5,\eps/4);
    \fill[left color=black!\bottom, right color=black!\low] (0,-5) rectangle (\eps/4,0);
    \fill[left color=black!\low, right color=black!\topp] (0,\eps/4) rectangle (\eps/4,5);
    \fill[top color=black!\topp, bottom color=black!\bottom, shading angle=-45] (0,0) rectangle (\eps/4,\eps/4);
    \fill[color=black!\topp] (\eps/4,\eps/4) rectangle (5,5);
    \fill[color=black!\bottom] (0,0) rectangle (-5,-5);
    \fill[color=black!\low] (\eps/4,-5) rectangle (5,0);
    \fill[color=black!\low] (0,\eps/4) rectangle (-5,5);
    
    \draw (0,-5)--(0,5);
    \draw[very thick] (\eps/4,-5)--(\eps/4,5);
    \draw[very thick] (-5,0)--(5,0);
    \draw (-5,\eps/4)--(5,\eps/4);

    \draw[-stealth] (\eps/4,4) -- (\eps/4-1,4);
    \node (w1) at (-0.5,4.5) {$\mathbf{h}_2$};
    \draw[-stealth] (4,0) -- (4,1);
    \node (w1) at (4.5,1) {$\mathbf{h}_1$};

    \datapoint{\topp}{2.5}{2.5};
    \draw[dashed] (2.5,2.5) circle (\eps);
    \auxpoint{\topp}{2.5-0.242\eps/2}{2.5-0.97\eps/2};
    \auxpoint{\topp}{2.5+0.164\eps/2}{2.5-0.98\eps/2};
    \auxpoint{\topp}{2.5-0.98\eps/2}{2.5-0.196\eps/2};
    \auxpoint{\topp}{2.5-0.98\eps/2}{2.5+0.196\eps/2};

    \datapoint{\low}{1.5}{-1.5};
    \auxpoint{\low}{1.5+0.242\eps/2}{-1.5+0.97\eps/2};
    
    \datapoint{\low}{3.5}{-3.5};
    \auxpoint{\low}{3.5-0.164\eps/2}{-3.5+0.98\eps/2};
    
    \datapoint{\low}{-2.5}{1.5};
    \auxpoint{\low}{-2.5+0.98\eps/2}{1.5+0.196\eps/2};
    
    \datapoint{\low}{-2.5}{3.5};
    \auxpoint{\low}{-2.5+0.98\eps/2}{3.5-0.196\eps/2};
  \end{tikzpicture}
  \caption{Example of the reduction from \textsc{2-Hyperplane Separability} for~$d=2$ dimensions.
    Big points are points in~$Q$ (dark gray) and in~$P$ (light gray). The small points are additionally introduced. The four lines are the breaklines of the four ReLUs. The two thick lines indicate the original two separating lines. The dashed circle has radius~$\epsilon$.}
  \label{fig:2HyperplaneExample}
\end{figure}

\section{Hardness Results for Linear Threshold Activations}\label{sec:LT}

A nowadays less popular, but more classical activation function than ReLU is the linear threshold function $x\mapsto\mathds{1}_{\{x>0\}}$. Analogously to \RELU{\mathcal{L}}, we consider the following decision version of the training problem for linear threshold functions:

\problemdef{\LT{\mathcal{L}}}
{Data points~$(\mathbf{x}_1,y_1),\ldots,(\mathbf{x}_n,y_n)\in \R^d\times\R$, a number~$k\in\N$ of linear threshold neurons, and a target error~$\gamma\in\R_{\ge 0}$.}
{Are there weights $\mathbf w_1, \ldots, \mathbf w_k\in \R^d$, biases $b_1,\ldots,b_k\in\R$, and coefficients $a_1,\ldots,a_k\in\R$ such that
	\[
	\sum_{i=1}^n \mathcal{L}\Bigg(\sum_{j=1}^k a_j\mathds{1}_{\{\mathbf{w}_j\cdot\mathbf{x}_i + b_j>0\}},\ y_i\Bigg) \le \gamma?
	\]
}

Note that for linear thresholds, we cannot assume $a_j\in\{-1,1\}$ because the normalization used in the ReLU case does not apply here.

As in the ReLU case, the crucial ingredient to study the training complexity of linear threshold networks is their geometry. To this end, observe that every function represented by a 2-layer linear threshold network is piecewise constant, where the pieces emerge from the hyperplane arrangement defined by the~$k$ hyperplanes $\mathbf w_j\cdot \mathbf x + b_j = 0$, $j\in[k]$, corresponding to the hidden neurons. Since our reductions for the ReLU case always use two ReLUs to approximate ``step functions'' from $0$ to $1$ and from~$1$ to~$0$, it is easy to adapt the reductions to the linear threshold case.

\begin{corollary}\label{cor:d=2}
	\LT{\mathcal{L}} is NP-hard even for $d=2$ and $\gamma=0$.
\end{corollary}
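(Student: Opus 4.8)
The plan is to mirror the reduction behind \Cref{thm:d=2}, replacing every four-ReLU levee by a two-neuron linear threshold \emph{box}. A box with slope $s$ is the piecewise constant function $g_s(x_1,x_2)=\mathds{1}_{\{x_2-sx_1+2>0\}}-\mathds{1}_{\{x_2-sx_1-2>0\}}$, which equals $1$ on the strip $|x_2-sx_1|<2$ and $0$ outside; it uses exactly two linear threshold neurons, so the global construction now needs only $k=2(m+n)$ neurons instead of $4(m+n)$. Since a piecewise constant function has no ramps, I would first delete the four fractional-label points on $h_0$ from every selection gadget, keeping only the points with labels in $\{0,1\}$, and reinterpret each gadget as forcing a box rather than a levee.

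The first step is the box analogue of \Cref{lem:selectiongadget}. On each of the three vertical lines $h_{-\epsilon},h_0,h_\epsilon$, a piecewise constant function with at most two breaklines restricts to a step pattern with at most two jumps, and the structural points force the pattern $0,1,0$; hence there is exactly one up-jump line and one down-jump line. These two lines take over the role of the up-step and the down-step of the ReLU proof, but each step is now a \emph{single} line, so the parallelogram argument establishing within-step parallelism disappears entirely. What remains is to force the up-jump and down-jump lines to share one common slope equal to some $s_i$, which I would attempt exactly as before, via the segment argument pinning the breaklines into $I_1,\dots,I_4$ together with the auxiliary points $\mathbf q_i^\pm$.

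The hard part is precisely this slope-forcing. In the ReLU proof the points $\mathbf q_i^+$ lie \emph{exactly} on the outer breaklines of the neighbouring levees (at $x_2-s_ix_1=2$), where the levee value is genuinely $0$, and it is this exact coincidence that rules out every intermediate slope. For a box the two breaklines are only confined to gaps, so a single $\mathbf q_i^+$ leaves the slope free in a short interval around each $s_i$ rather than pinning it exactly. I would remedy this by squeezing the box: placing the label-$0$ points at $x_2-s_ix_1=\pm2$ and the label-$1$ points at $\pm(2-\eta)$ forces both breakline offsets into the window $(2-\eta,2)$, whereupon the $\mathbf q_i^\pm$ inequalities yield both slopes within $O(\eta\delta)$ of $s_i$. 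Choosing $\eta$ polynomially small (say $\eta=m^{-2}$) makes this residual error negligible against the gadget spacing $\Delta$ and the box width, so the consistency points $\mathbf p_{j,r}$ are still classified correctly: each lies deep inside its intended box and outside all others, which is the essentially unchanged analogue of \Cref{lem:nonzerolevees} (its support estimates are identical, since the box support $|x_2-sx_1|<2$ matches that of the levee).

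With the single-gadget lemma in hand, the induction proving the analogue of \Cref{lem:manygadgets} goes through unchanged---each gadget now contributes two breaklines, $2m$ in total---and the global construction together with the equivalence argument of \Cref{thm:d=2} carries over verbatim, establishing NP-hardness of \LT{\mathcal L} already for $d=2$ and $\gamma=0$.
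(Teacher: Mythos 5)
Your proposal follows the same route as the paper: replace each four-ReLU levee by a two-neuron stripe (box), drop the fractional-label points, and rerun the selection-gadget and global constructions with $k=2(m+n)$; the paper's own proof of \Cref{cor:d=2} is exactly this sketch and leaves all details to the reader. You actually go further than the paper by spotting the genuine subtlety that the jump lines of a piecewise constant function can only be confined to open gaps between data points, never pinned exactly (unlike the breaklines in \Cref{lem:selectiongadget}), and your ``squeezing'' fix (labels $0$ at vertical offset $\pm 2$, labels $1$ at $\pm(2-\eta)$) is the right kind of repair.

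However, your quantitative analysis has a concrete slip. The slope error that the points $\mathbf q_i^\pm$ leave open is $O(\eta\,(s_{i+1}-s_i))$, where $s_{i+1}-s_i$ is the gap between consecutive slopes of the \emph{same} gadget. For clause gadgets this gap is $\delta=\frac{1}{2m}$, matching your claimed $O(\eta\delta)$; but for variable gadgets the two slopes are $-1$ and $1$, so the gap is $2$ and the error is only $O(\eta)$. A consistency point $\mathbf p_{j,r}$ has $x_1$-coordinate as large as $\Delta(n+m)=\Theta(m(n+m))$, so a variable box tilted by $\Theta(\eta)$ is vertically displaced there by $\Theta(\eta\, m(n+m))$; with your choice $\eta=m^{-2}$ this is $\Theta((n+m)/m)$, which exceeds the stripe half-width $2$. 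Concretely, soundness breaks: in a valid fit, the slope-$(\approx 1)$ box of variable $v_i$ may tilt so as to miss $\mathbf p_{j,r}$ entirely, and then the fit can declare $v_i$ false while still covering $\mathbf p_{j,r}$ by the clause box for literal $r$, so inconsistent assignments slip through the backward direction. The repair is easy: choose $\eta$ small enough that every such displacement is well below $1$, e.g.\ $\eta\le\frac{1}{16m(n+m)}$ (any sufficiently small $1/\poly(n+m)$ works); with that correction, your single-gadget lemma, the analogue of \Cref{lem:nonzerolevees}, and hence the corollary all go through.
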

\begin{proof}
	We use an analogous reduction to the proof of \Cref{thm:d=2}. Instead of a sum of levees, we use a sum of ``stripes'' within which the function attains value $1$. With this idea, it is straight-forward to build selection gadgets and an analogous global construction. Note that the number $k$ of required linear threshold neurons is only $k=2(m+n)$ for a \textsc{POITS} instance with $m$ clauses and $n$ variables because each stripe can be realized with two linear threshold neurons instead of the four ReLUs required to build a levee.
\end{proof}

For the sake of completeness, we note that also the W[1]-hardness result by \citet{FHN22} extends to linear threshold functions. To this end, consider the $\ell^p$-loss $\ell^p(\hat{y},y)=\lvert \hat{y}-y \rvert^p$, with $\ell^0$ simply counting the non-zero components of $\hat{y}-y$.

\begin{corollary}\label{cor:k=1}
	For each $p\in\left[0,\infty\right[$, \LT{\ell^p} with $k=1$ is NP-hard, W[1]-hard with respect to $d$ and not solvable in $\rho(d)n^{o(d)}\poly(L)$ time (where $L$ is the input bit-length) for any function $\rho$ assuming the ETH.
\end{corollary}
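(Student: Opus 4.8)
The plan is to recast training a single linear threshold neuron as a purely combinatorial halfspace-classification problem and then to transfer the reduction of Froese et al.~\cite{FHN22}, whose hardness I expect to live entirely in the placement of a single hyperplane. With $k=1$ the network computes $\phi(\mathbf{x})=a_1\mathds{1}_{\{\mathbf{w}_1\cdot\mathbf{x}+b_1>0\}}$, a two-valued function equal to $0$ on the closed halfspace $\{\mathbf{w}_1\cdot\mathbf{x}+b_1\le 0\}$ and to $a_1$ on its complement. I would work with instances whose labels all lie in $\{0,1\}$, so that the only geometric freedom is the orientation and offset of the threshold hyperplane, exactly the freedom the FHN22 single-ReLU breakline already exploits. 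The first technical step is to inspect their construction and confirm (as I expect) that its correctness argument uses only the \emph{sign} of $\mathbf{w}_1\cdot\mathbf{x}+b_1$, i.e.\ on which side of the breakline each point lies, and never the affine values attained on the ReLU ramp; the piecewise-constant step then replays the ReLU's role, and the combinatorial choice encoded by the breakline is preserved verbatim.

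Once $a_1=1$ is installed, $\phi(\mathbf{x}_i)\in\{0,1\}$ for every data point, so for every exponent $p\in[0,\infty[$ the per-point loss $|\phi(\mathbf{x}_i)-y_i|^p$ equals $0$ on a correctly classified point and $1$ on a misclassified one (for $p=0$ this is precisely the definition of $\ell^0$). Hence the total error is the number of points on the wrong side of the hyperplane, \emph{uniformly in $p$}, and the \LT{\ell^p} problem with $k=1$ becomes: is there a hyperplane misclassifying at most $\gamma$ of the points? Since the ambient dimension of the constructed instance is unchanged and the number of points grows only polynomially, the parameterized reduction and the ETH-based lower bound of $\rho(d)n^{o(d)}\poly(L)$ carry over directly, yielding NP-hardness, W[1]-hardness with respect to $d$, and the $n^{\Omega(d)}$ bound for all $p$ at once.

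The main obstacle is that the coefficient $a_1$ ranges over all of $\R$, and for $p>0$ a \emph{fractional} $a_1$ can in principle beat the integral misclassification count: for a fixed sign pattern the best value minimizes $n_1^{+}|a_1-1|^p+n_0^{+}|a_1|^p$ over the positive side, which for $p>0$ can sit strictly inside $(0,1)$. To neutralize this I would augment the construction with a large (but polynomial) cluster of label-$1$ ``forcing'' points placed far on the positive side of every relevant hyperplane, making any solution pay $\Theta(K)|a_1-1|^p$ and thereby pinning $a_1$ into a tiny neighbourhood of $1$; combined with a half-integer error budget this keeps the residual slack below the integrality gap of the underlying misclassification count, so the yes/no decision is unaffected. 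For $p=0$ this device is unnecessary, since $|a_1-1|^0$ and $|a_1|^0$ already force $a_1\in\{0,1\}$.

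If the reinspection in the first paragraph fails, i.e.\ if the FHN22 reduction genuinely exploits intermediate ramp values to encode information, then the substitution is not verbatim and I would instead re-run their reduction from its source problem with the threshold gadget installed from the outset, re-checking each gadget under the piecewise-constant activation. I expect the step to \emph{simplify} rather than complicate matters, since a single linear threshold has strictly fewer degrees of freedom than a single ReLU; so the bulk of the remaining work is the bookkeeping around the forcing cluster and the budget, rather than any new combinatorial idea.
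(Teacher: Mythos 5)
Your proposal follows essentially the same route as the paper: the paper's entire proof is the observation that in the reduction from \textsc{Multicolored Clique} of Froese et al.\ the single ReLU neuron can be replaced by a linear threshold neuron without changing the logic of the reduction, which is exactly your plan of transplanting that construction with the threshold gadget installed. Your additional device for pinning $a_1$ near $1$ (the forcing cluster) and your fallback of re-verifying each gadget under the piecewise-constant activation are extra insurance that the paper's one-line inspection argument implicitly claims is unnecessary, but they do not change the approach.
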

\begin{proof}
	Having a careful look into the reduction from \textsc{Multicolored Clique} by \citet{FHN22}, it turns out that the single ReLU neuron used in this reduction can be replaced by a linear threshold neuron without changing the logic of the reduction.
\end{proof}

Finally, also \Cref{thm:k=4} finds its analogue for the linear threshold case.

\begin{corollary}\label{cor:k=2}
	\LT{\mathcal{L}} with~$k=2$ and $\gamma=0$ is W[1]-hard with respect to~$d$ and not solvable in $\rho(d)n^{o(d)}\poly(L)$ time (where~$L$ is the input bit-length) for any function~$\rho$ assuming the ETH.
\end{corollary}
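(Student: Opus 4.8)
The plan is to reuse the reduction from the proof of \Cref{thm:k=4} essentially verbatim: from an instance of \textsc{2-Hyperplane Separability} we build the same labeled points (each $\mathbf q\in Q$ with label $1$, each $\mathbf p\in P$ with label $0$, and the intermediate points $\mathbf r_{\mathbf{qp}}$ with label $1$ and $\mathbf s_{\mathbf{qp}}$ with label $0$), but now ask whether they can be fitted with $k=2$ linear threshold neurons and $\gamma=0$. The reason the budget drops from four ReLUs to two threshold neurons is exactly the observation already used for \Cref{cor:d=2}: where the ReLU construction spent two ReLUs to build an up-step and two more to build a down-step, a single linear threshold neuron realises each step directly. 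Since the reduction preserves the input dimension $d$ and still produces $n\in O(m^2)$ points in polynomial time, the W[1]-hardness and the $n^{o(d)}$ ETH lower bound transfer from \textsc{2-Hyperplane Separability} exactly as in \Cref{thm:k=4}.

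For the forward direction I would take a strictly separating pair of hyperplanes $\mathcal H_1,\mathcal H_2$ satisfying \eqref{eqn:q}--\eqref{eqn:eps} (with $\lVert\mathbf h_i\rVert=1$) and set
\[
  \phi(\mathbf x)=\mathds{1}_{\{\mathbf h_1\cdot\mathbf x+o_1>0\}}-\mathds{1}_{\{\mathbf h_2\cdot\mathbf x+o_2>0\}},
\]
that is, $a_1=1$ and $a_2=-1$. For $\mathbf q\in Q$, \eqref{eqn:q} gives $\phi(\mathbf q)=1-0=1$, and for $\mathbf p\in P$, \eqref{eqn:p} gives $\phi(\mathbf p)=0$ whether both signs are positive or both are negative. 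The intermediate points are handled exactly as in \Cref{thm:k=4}: since $\mathbf r_{\mathbf{qp}}$ has distance $\nicefrac{\epsilon}{2}$ to $\mathbf q$, which by \eqref{eqn:eps} has distance $>\epsilon$ to each hyperplane, $\mathbf r_{\mathbf{qp}}$ lies on the same side of both hyperplanes as $\mathbf q$, so $\phi(\mathbf r_{\mathbf{qp}})=\phi(\mathbf q)=1$; symmetrically $\phi(\mathbf s_{\mathbf{qp}})=\phi(\mathbf p)=0$. Hence all points are fitted exactly.

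For the converse I would start from an exact fit $\phi=a_1\mathds{1}_{\{\mathbf w_1\cdot\mathbf x+b_1>0\}}+a_2\mathds{1}_{\{\mathbf w_2\cdot\mathbf x+b_2>0\}}$ and let $\mathcal W_1,\mathcal W_2$ be the two induced hyperplanes. Since $\phi$ is piecewise constant and changes value only across $\mathcal W_1$ or $\mathcal W_2$, the equality $\phi(\mathbf r_{\mathbf{qp}})=1\neq 0=\phi(\mathbf s_{\mathbf{qp}})$ forces at least one of the two hyperplanes to cross (transversally) the segment between $\mathbf r_{\mathbf{qp}}$ and $\mathbf s_{\mathbf{qp}}$, which lies strictly inside the open segment $\mathbf{qp}$. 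Note that no counting argument as in \Cref{thm:k=4} is needed here, because we have exactly two hyperplanes and each open segment is crossed by at least one of them, which is precisely the crossing condition of \textsc{2-Hyperplane Separability}. The remaining requirement --- that no point of $Q\cup P$ lie on either hyperplane --- I would obtain by an arbitrarily small generic perturbation of $\mathcal W_1,\mathcal W_2$: the guaranteed crossings are transversal and bounded away from the endpoints by at least $\nicefrac{\epsilon}{2}$ (again using the intermediate points), so a sufficiently small perturbation preserves every interior crossing while moving both hyperplanes off the finitely many points of $Q\cup P$.

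I expect the main obstacle to be this final strictness step in the converse direction: one must verify that the perturbation cannot destroy any of the required crossings, which is exactly where the slack provided by $\mathbf r_{\mathbf{qp}},\mathbf s_{\mathbf{qp}}$ together with the distance bound \eqref{eqn:eps} is essential (and where one checks that at most one hyperplane can contain a given segment, so the value-changing crossing is genuinely transversal). Everything else is a direct, dimension-preserving transcription of the ReLU argument, replacing each two-ReLU step by a single threshold neuron.
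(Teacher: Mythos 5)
Your proposal is correct and follows the same route as the paper: a dimension-preserving reduction from \textsc{2-Hyperplane Separability} in which each height-one step is realized by a single linear threshold neuron (with $a_1=1$, $a_2=-1$) instead of two ReLUs, so that $k=2$ suffices and the counting argument over $I^+,I^-$ from \Cref{thm:k=4} becomes unnecessary. The one place where you diverge is that you keep the auxiliary points $\mathbf{r}_\mathbf{qp},\mathbf{s}_\mathbf{qp}$, whereas the paper observes that for threshold activations they can be dropped entirely: since the network function is piecewise \emph{constant}, $\phi(\mathbf q)=1\neq 0=\phi(\mathbf p)$ already forces one of the two indicators to change value along the segment $\mathbf{qp}$; the auxiliary points were only needed in the ReLU case, where a single affine piece can interpolate between the labels $1$ and $0$. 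Dropping them yields a reduction with $O(m)$ rather than $O(m^2)$ points. Your version is not wrong --- the auxiliary points buy you crossings bounded away from $\mathbf q$ and $\mathbf p$, which your perturbation step then exploits --- but the strictness issue you rightly flag as the main obstacle has a cleaner resolution that needs no generic perturbation and no transversality discussion: decrease each $b_i$ by a sufficiently small $\delta_i>0$, chosen smaller than every strictly positive value of $\mathbf w_i\cdot\mathbf x+b_i$ over the data points $\mathbf x$. Because the threshold uses a \emph{strict} inequality, this changes no indicator value at any data point (points with $\mathbf w_i\cdot\mathbf x+b_i>0$ stay strictly positive, points with $\mathbf w_i\cdot\mathbf x+b_i\le 0$ become strictly negative), so the fit remains exact, and afterwards no data point lies on either hyperplane. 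Every value change along a segment is then a genuine sign change between its endpoints, i.e.\ a transversal crossing of the open segment, and the degenerate cases your sketch must handle (a hyperplane passing through $\mathbf{r}_\mathbf{qp}$ or $\mathbf{s}_\mathbf{qp}$, or containing the segment) can no longer occur.
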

\begin{proof}
	The proof is analogous to (even much easier than) the one of \Cref{thm:k=4}. Instead of two ReLUs  to realize a step of height one, we can simply use one linear threshold neuron (which is why we obtain hardness already for $k=2$ in this case). Note that we do not even need to introduce the additional data points $\mathbf r_{\mathbf q \mathbf p}$ and  $\mathbf s_{\mathbf q \mathbf p}$ and obtain a much more direct reduction from \textsc{2-Hyperplane Separability}.
\end{proof}

\section{An Algorithm for Exact Fitting in the Convex Case}

Contrasting the previous hardness results for \RELU{\mathcal{L}}, we now consider the tractable special case where all coefficients~$a_j$ are 1.
In this case, the neural network realizes a convex continuous piecewise linear function
$\phi(\mathbf{x})=\sum_{j=1}^k[\mathbf{w}_j\cdot\mathbf{x}+b_j]_+$
with at most~$2^k$ distinct (affine) pieces.
We show that this case (which we call $\RELU{\mathcal{L}}^+$) with target error~$\gamma=0$ is FPT for the parameter~$d+k$ (note that, for $\gamma > 0$, the problem is already W[1]-hard with respect to~$d$ for~$k=1$~\cite{FHN22}).

\begin{theorem}\label{thm:convexFPT}
  $\RELU{\mathcal{L}}^+$ can be solved in $2^{O(k^2d)}\poly(k,L)$ time for~$\gamma=0$, where~$L$ is the input bit-length.
\end{theorem}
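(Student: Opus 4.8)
The plan is to exploit convexity to turn exact fitting into a purely combinatorial search over ``active sets'' whose depth is bounded by the number of free parameters, $k(d+1)$. First I would record the algebraic form of the network. Writing $\ell_j(\mathbf x)=\mathbf w_j\cdot\mathbf x+b_j$ and collecting the parameters into a single vector $\mathbf p=(\mathbf w_1,b_1,\dots,\mathbf w_k,b_k)\in\R^{k(d+1)}$, each $\ell_j(\mathbf x_i)$ is a linear functional of $\mathbf p$. Since all $a_j=1$, for every fixed input $\mathbf x$ we have $\phi(\mathbf x)=\sum_{j}[\ell_j(\mathbf x)]_+=\max_{S\subseteq[k]}\sum_{j\in S}\ell_j(\mathbf x)$. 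Hence, viewed as a function of $\mathbf p$, the value $g_i(\mathbf p):=\max_{S\subseteq[k]}\sum_{j\in S}\ell_j(\mathbf x_i)$ at data point $i$ is a \emph{convex} piecewise linear function of $\mathbf p$ with at most $2^k$ linear pieces. The crucial reformulation is then: a parameter vector $\mathbf p$ fits all points exactly (with $\gamma=0$) if and only if
\[
\text{(i)}\ \textstyle\sum_{j\in S}\ell_j(\mathbf x_i)\le y_i\ \text{ for all } i\in[n],\ S\subseteq[k], \qquad \text{(ii)}\ \text{for each } i \text{ some } S \text{ makes (i) tight.}
\]
Condition (i) defines a polyhedron $C\subseteq\R^{k(d+1)}$ cut out by $n2^k$ linear inequalities, and (ii) asks that at least one inequality per point be tight. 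Thus exact fitting reduces to finding a point of $C$ that is tight for at least one constraint from each of the $n$ groups.

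Second, I would search for such a point by a bounded-depth branching procedure maintaining an affine subspace $\mathcal A\subseteq\R^{k(d+1)}$ (initially everything) that records the equalities committed so far. Processing the points one by one, for point $i$ I distinguish two cases. If some subset $S$ satisfies $\sum_{j\in S}\ell_j(\mathbf x_i)\equiv y_i$ identically on $\mathcal A$, the point is automatically fit for every $\mathbf p\in\mathcal A\cap C$ (the identity forces the maximum to reach $y_i$, while membership in $C$ caps it at $y_i$); I leave $\mathcal A$ untouched and move on. Otherwise every candidate equality is non-constant on $\mathcal A$, and I \emph{branch}: for each $S\subseteq[k]$ I impose $\sum_{j\in S}\ell_j(\mathbf x_i)=y_i$, replacing $\mathcal A$ by the subspace where this holds. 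Because the imposed functional is non-constant on $\mathcal A$, each surviving child strictly lowers $\dim\mathcal A$ by one (children where the functional is a nonzero constant are infeasible and pruned). At each leaf I solve a single linear program testing whether $\mathcal A\cap C\neq\emptyset$; any feasible point is a valid network.

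Third, the complexity follows from the dimension bookkeeping. Along any root-to-leaf path the dimension of $\mathcal A$ starts at $k(d+1)$ and drops by at least one at every branching node, so there are at most $k(d+1)$ branchings, each with at most $2^k$ children. This yields at most $(2^k)^{k(d+1)}=2^{O(k^2d)}$ leaves; the per-node work (checking constancy of $2^k$ functionals on $\mathcal A$ by linear algebra) and the final LP over the $n2^k$ constraints of $C$ are polynomial in $k$ and $L$. For correctness in the forward direction, a genuine solution $\mathbf p^\star$ survives: at each branching node I follow the branch given by a subset attaining the maximum at $\mathbf p^\star$, keeping $\mathbf p^\star\in\mathcal A$, so $\mathbf p^\star\in\mathcal A\cap C$ at the corresponding leaf; conversely, every leaf passing the LP test manifestly fits all data.

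The step I expect to be the main obstacle is the depth bound, i.e.\ justifying rigorously that we branch at most $k(d+1)$ times. The content is the claim that \emph{whenever} a point cannot be auto-fit, every equality we might impose is non-redundant on the current subspace and hence strictly cuts the dimension; equivalently, that a point already ``pinned'' by previous commitments never forces a branch. One must also verify that the auto-fit guarantee is preserved as $\mathcal A$ shrinks under later branchings (an identity on $\mathcal A$ remains an identity on any subspace of $\mathcal A$) and that pruning constant-but-nonzero branches is sound. Finally, it is worth stressing where convexity enters: the decomposition of exact fitting into the convex feasibility region $C$ plus one tight constraint per point is exactly what fails in the general (non-convex) case, which is why the same parameterization is W[1]-hard there but fixed-parameter tractable here.
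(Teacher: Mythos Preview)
Your proposal is correct and follows essentially the same strategy as the paper: branch over the $2^k$ possible active sets for each data point, use a dimension argument on the $k(d+1)$-dimensional parameter space to bound the branching depth, and test feasibility by linear programming at the leaves. Your $\max_{S\subseteq[k]}\sum_{j\in S}\ell_j(\mathbf x)$ reformulation and the split into the global polyhedron $C$ (all inequalities) versus the affine subspace $\mathcal A$ (committed equalities) is a mild repackaging of the paper's ``active region'' bookkeeping; your auto-fit test (constancy of a linear functional on $\mathcal A$) is a somewhat simpler surrogate for the paper's LP-based \texttt{check-forced-points}, but both serve the identical purpose of guaranteeing that every genuine branching step strictly drops the dimension.
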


Before giving the proof, we introduce some definitions.
For $I\subseteq[k]$, let~$R_I\subseteq\R^d$ be the \emph{active region} of the ReLUs in~$I$, that is,~$\mathbf{x}\in R_I$ if and only if
\begin{align*}
  \forall j\in I : \mathbf{w}_j\mathbf{x}+b_j&\ge 0,\\
  \forall j\in [k]\setminus I : \mathbf{w}_j\mathbf{x}+b_j&\le 0.
\end{align*}
Note that~$R_I$ could be empty.
Clearly, on each~$R_I$, $\phi$ is the affine function~$\sum_{j\in I}(\mathbf{w}_j\cdot\mathbf{x}+b_j)$.
Let $F_I\coloneqq \{(\mathbf{x},\phi(\mathbf{x}))\mid\mathbf{x}\in R_I)\}$ be the piece corresponding to~$I$.

The convexity of~$\phi$ now allows for a branching algorithm assigning the input data points to the at most~$2^k$ pieces. 

\begin{proof}[Proof of \Cref{thm:convexFPT}]
  Let~$(\mathbf{x}_1,y_1),\ldots,(\mathbf{x}_n,y_n)\in\R^{d+1}$,~$k\in\N$, and let~$L$ denote the overall number of input bits.
  The idea is to use a search tree algorithm to check whether the data can be exactly fitted with~$k$ (convex) ReLUs. To this end, we define~$2^k$ sets~$S_1,\ldots,S_{2^k}$ where each~$i\in[2^k]$ one-to-one corresponds to a certain subset~$I(i)\subseteq[k]$ of active ReLUs.
  For given point sets $S\subseteq\R^{d+1}$ and $S_i\subseteq\R^{d+1}$, $i\in[2^k]$, our algorithm checks whether the points in~$S$ can be exactly fitted by~$k$ ReLUs with the additional constraint
  that~$S_i\subseteq F_{I(i)}$ holds for each~$i\in[2^k]$. That is, the following (in)equalities must hold
  \begin{equation}\mathbf{x}\in R_{I(i)} \text{ and } \sum_{j\in I(i)}\mathbf{w}_j\mathbf{x} + b_j = y,\quad i\in[2^k], (\mathbf{x},y)\in S_i.\label{eqn:faceI}
  \end{equation}
\noindent
  \Cref{alg:exact-fit} depicts the pseudocode of our \texttt{ExactFit} algorithm.
  We solve an instance with an initial call where~$S\coloneqq \{(\mathbf{x}_1,y_1),\ldots,(\mathbf{x}_n,y_n)\}, S_1=S_2=\cdots=S_{2^k}\coloneqq \emptyset$.

  \LinesNumbered
  \begin{algorithm2e*}[t]
    \caption{\texttt{ExactFit}$(S,S_1,\ldots,S_{2^k})$}
    \label{alg:exact-fit}
    \DontPrintSemicolon
    \SetKwFunction{checkFeas}{check-feasibility}
    \SetKwFunction{fit}{ExactFit}
    \SetKwFunction{checkForced}{check-forced-points}
    \eIf{$S=\emptyset$}{
      \Return{$\checkFeas(S_1,\ldots,S_{2^k})$}\label{line:LP}
    }{
      choose $(\mathbf{x},y)\in S$\;
      $S\gets S\setminus \{(\mathbf{x},y)\}$\;
      \ForEach{$i=1,\ldots,2^k$}{\label{line:branch}
        $S_i\gets S_i\cup\{(\mathbf{x},y)\}$\;\label{line:put}
        $\checkForced(S,S_1,\ldots,S_{2^k})$\;\label{line:forced}
        $a \gets \fit(S, S_1,\ldots, S_{2^k})$\;\label{line:rec}
        \lIf{$a = $ \normalfont Yes}{\Return{\normalfont Yes}}
      }
    }
    \Return{\normalfont No}\;\label{line:No}
  \end{algorithm2e*}
  
  The correctness of~\Cref{alg:exact-fit} follows by induction on~$|S|$.
  For~$S=\emptyset$, we simply need to check whether the system~(\ref{eqn:faceI}) of linear (in)equalities is feasible. This can be done by solving a linear program with $k(d+1)$ variables and~$O(n)$ constraints in $O(\poly(k,L))$ time (this is done by \texttt{check-feasibility} in Line~\ref{line:LP}). 
  
  If~$S\neq\emptyset$ and~$(S,S_1,\ldots,S_{2^k})$ is a no-instance, then none of the recursive calls in Line~\ref{line:rec} will be successful (by induction). Hence, the algorithm correctly returns ``No'' in Line~\ref{line:No}.
  
  Now assume that~$(S,S_1,\ldots,S_{2^k})$ is a yes-instance.
  Then, any point~$(\mathbf{x},y)\in S$ must lie on some piece~$F_{I(i)}$. That is,~$(\mathbf{x},y)$ can be put into some~$S_i$.
  Hence, in Line~\ref{line:branch}, we branch into all~$2^k$ options.
  In each branch, we then check whether putting~$(\mathbf{x},y)$ into~$S_i$ also forces other points from~$S$ (due to assumed convexity) to be contained in some~$S_{i'}$ (this is done by \texttt{check-forced-points} in Line~\ref{line:forced}).
  We do this in order to achieve our claimed running time bound as we will show later.

  The pseudocode for this check is given in \Cref{alg:check-force}.
  \begin{algorithm2e*}[t]
  \caption{\texttt{check-forced-points}$(S,S_1,\ldots,S_{2^k})$}
  \label{alg:check-force}
  \DontPrintSemicolon
  \SetKwFunction{checkForced}{check-forced-points}
  \SetKwFunction{lb}{lower-bound}
    \ForEach{$(\mathbf{x},y)\in S$}{
      \ForEach{$i=1,\ldots,2^k$}{
        $\mu\gets \lb(\mathbf{x},i,S_1,\ldots,S_{2^k})$\;\label{line:lb}
        \If{$\mu = y$}{
          $S_i\gets S_i\cup\{(\mathbf{x},y)\}$\;
          $S\gets S\setminus \{(\mathbf{x},y)\}$\;
          \textbf{restart}\;\label{line:restart}
        }
        \If{$\mu > y$}{
          \textbf{reject branch} (\Return No)\;\label{line:reject}
        }
      }
    }
\end{algorithm2e*}
  The idea is to compute for each~$(\mathbf{x},y)\in S$ and each~$i\in[2^k]$ the lower bound
  \begin{align*}
    \mu \coloneqq \min_{\mathbf{w}_j,b_j} \sum_{j\in I(i)}(\mathbf{w}_j\mathbf{x} + b_j)
  \end{align*}
  subject to the constraints (\ref{eqn:faceI}), which again can be accomplished via linear programming in~$O(\poly(k,L))$ time. Note that both $\mu=+\infty$ (linear program is infeasible) and $\mu=-\infty$ (linear program is unbounded) are possible. This is done by \texttt{lower-bound} in Line~\ref{line:lb}.
  Now, note that~$\mu > y$ implies that
  \[\phi(\mathbf{x})=\sum_{j=1}^k[\mathbf{w}_j\mathbf{x}+b_j]_+ \ge \sum_{j\in I(i)}(\mathbf{w}_j\mathbf{x} + b_j) > y\]
  holds for every~$\phi$ satisfying~(\ref{eqn:faceI}).
  That is, we can reject~(Line~\ref{line:reject}) the current branch of \texttt{ExactFit}.
  If $\mu = y$, then we have
  \[\phi(\mathbf{x})\ge \sum_{j\in I(i)}(\mathbf{w}_j\mathbf{x} + b_j) = y\]
  for every~$\phi$ satisfying~(\ref{eqn:faceI}), and thus we can safely put~$(\mathbf{x},y)$ into~$S_i$. To see that this is correct, assume that a solution puts~$(\mathbf{x},y)\in F_{I'}$ for some~$I'\subseteq [k]$ with~$I'\neq I(i)$.
Then, we have
  \begin{align*}
    y &= \sum_{j\in I'}(\mathbf{w}_j\mathbf{x}+b_j)=\sum_{j\in I'\cap I(i)}(\mathbf{w}_j\mathbf{x}+b_j)+\sum_{j\in I'\setminus I(i)}(\mathbf{w}_j\mathbf{x}+b_j)\\
    &= \sum_{j\in I(i)}(\mathbf{w}_j\mathbf{x}+b_j)=\sum_{j\in I'\cap I(i)}(\mathbf{w}_j\mathbf{x}+b_j)+\sum_{j\in I(i)\setminus I'}(\mathbf{w}_j\mathbf{x}+b_j),
  \end{align*}
  which implies
  \[\sum_{j\in I'\setminus I(i)}(\mathbf{w}_j\mathbf{x}+b_j)=\sum_{j\in I(i)\setminus I'}(\mathbf{w}_j\mathbf{x}+b_j).\]
  Since~$\mathbf{x}\in R_{I'}$, it follows that the left sum is at least zero and the right sum is at most zero. Thus, both sums are zero and $\mathbf{w}_j\mathbf{x}+b_j=0$ holds for all $j\in(I'\setminus I(i))\cup(I(i)\setminus I')$, which shows that~$\mathbf{x}\in R_{I(i)}$.
  Thus, putting $(\mathbf{x},y)$ into~$S_i$ is correct.
  
  Note that adding a point to~$S_i$ adds new constraints to~(\ref{eqn:faceI}). Hence,
  we restart the procedure (Line~\ref{line:restart}) to check whether this forces new points.
  Overall, \texttt{check-forced-points} takes~$O(n^22^k\poly(k,L))\subseteq O(2^k\poly(k,L))$ time.

  As regards the correctness of~\Cref{alg:exact-fit} now, note that \texttt{check-forced-points} clearly never incorrectly rejects a branch of~\texttt{ExactFit} and never forces points incorrectly.
  Hence, one of the recursive calls in Line~\ref{line:rec} will correctly answer ``Yes'' (by induction), which proves the correctness.

  It remains to analyze the running time of~\Cref{alg:exact-fit}.
  Clearly, each call to the algorithm takes~$O(2^k\poly(k,L))$ time and recursively branches into~$2^k$ options.
  It remains to bound the depth of the recursion tree.
  To this end, note that the recursion stops as soon as~$S$ is empty or the current branch is rejected by~\Cref{alg:check-force}.
  We claim that the latter happens after at most~$k(d+1)+1$ recursive calls.

  To verify this claim, observe that the algorithm maintains the invariant that the linear program
  \[\min_{\mathbf{w}_j,b_j}\sum_{j\in I(i)}(\mathbf{w}_j\mathbf{x} +b_j)\quad \text{s.t. } (\ref{eqn:faceI})\]
  has a solution $\mu < y$ for every~$i\in[2^k]$ and $(\mathbf{x},y)\in S$. This invariant is achieved by checking for forced points in Line~\ref{line:forced}.
  Let $P\subseteq\R^{k(d+1)}$ be the polyhedron defined by~(\ref{eqn:faceI}) in the variables~$(\mathbf{w}_j,b_j)_{j=1,\ldots,k}$.
  Now, adding a point~$(\mathbf{x},y)$ to some~$S_i$ (Line~\ref{line:put}) adds constraints to~(\ref{eqn:faceI}) which yield a polyhedron contained in
  \[P'\coloneqq P\cap \{(\mathbf{w}_j,b_j)_{j=1,\ldots,k}\mid \sum_{j\in I(i)}(\mathbf{w}_j\mathbf{x}+b_j)=y\}.\]
  By the above invariant, there exists a $(\mathbf{w}_j,b_j)_{j=1,\ldots,k}\in P$ with $\sum_{j\in I(i)}(\mathbf{w}_j\mathbf{x}+b_j)<y$. Hence, $\aff(P')\subsetneq \aff(P)$ and~$\dim(P')<\dim(P)$. That is, each recursive call decreases the dimension of the feasible polyhedron. 
  Thus, after at most~$k(d+1)+1$ recursive calls we reach an empty polyhedron, in which case the current branch is rejected.
  
  To sum up, we obtain an overall running time of~$2^{O(k^2d)}\poly(k,L)$.
\end{proof}

Clearly, \Cref{thm:convexFPT} analogously holds for the concave case where~$a_j=-1$ for all~$j\in[k]$.
Note that if positive and negative coefficients~$a_j$ are allowed, then our search tree approach of \Cref{alg:exact-fit} does not work since we cannot check for forced points anymore which is necessary to ensure a bounded recursion depth.
Indeed, \Cref{thm:k=4} implies that this approach cannot work already for~$k=4$.
It is unclear whether this issue can be resolved for~$k=2$ or~$k=3$.

\section{Conclusion}

We closed several gaps in the literature regarding the computational complexity of training two-layer neural networks. Our results give some insight into the geometry of functions realized by such networks and yield a better understanding of their complexity and expressiveness.
We thereby settled the border of computational tractability almost completely.
The remaining open questions are the following:

\begin{itemize}
  \item Is the problem with~$d=2$ in FPT when parameterized by~$k$? This is open for both ReLUs and linear thresholds.
    Note that W[1]-hardness with respect to~$k$ for any constant~$d$ would answer Question~2 negatively.
  \item Is the case $\gamma=0$ and~$k\in\{2,3\}$ in FPT with parameter~$d$ for ReLUs?
\end{itemize}

In a broader context, open directions are to further study the computational complexity in appropriate approximate settings, draw further conclusions on generalization, and understand deeper network architectures.

\bibliographystyle{abbrvnat}
\bibliography{ref}

\begin{thebibliography}{25}
\providecommand{\natexlab}[1]{#1}
\providecommand{\url}[1]{\texttt{#1}}
\expandafter\ifx\csname urlstyle\endcsname\relax
  \providecommand{\doi}[1]{doi: #1}\else
  \providecommand{\doi}{doi: \begingroup \urlstyle{rm}\Url}\fi

\bibitem[Abrahamsen et~al.(2021)Abrahamsen, Kleist, and Miltzow]{AKM21}
M.~Abrahamsen, L.~Kleist, and T.~Miltzow.
\newblock Training neural networks is {$\exists$}{$\mathbb{R}$}-complete.
\newblock In \emph{Advances in Neural Information Processing Systems 35
  (NeurIPS~'21)}, pages 18293--18306, 2021.

\bibitem[Arora et~al.(2018)Arora, Basu, Mianjy, and Mukherjee]{ABMM18}
R.~Arora, A.~Basu, P.~Mianjy, and A.~Mukherjee.
\newblock Understanding deep neural networks with rectified linear units.
\newblock In \emph{Proceedings of the 6th International Conference on Learning
  Representations (ICLR~'18)}. OpenReview.net, 2018.

\bibitem[Bertschinger et~al.(2022)Bertschinger, Hertrich, Jungeblut, Miltzow,
  and Weber]{BHJMW22}
D.~Bertschinger, C.~Hertrich, P.~Jungeblut, T.~Miltzow, and S.~Weber.
\newblock Training fully connected neural networks is
  {$\exists$}{$\mathbb{R}$}-complete.
\newblock \emph{CoRR}, abs/2204.01368, 2022.

\bibitem[Blum and Rivest(1992)]{BR92}
A.~L. Blum and R.~L. Rivest.
\newblock Training a 3-node neural network is {NP}-complete.
\newblock \emph{Neural Networks}, 5\penalty0 (1):\penalty0 117--127, 1992.

\bibitem[Boob et~al.(2022)Boob, Dey, and Lan]{BDL22}
D.~Boob, S.~S. Dey, and G.~Lan.
\newblock Complexity of training {ReLU} neural network.
\newblock \emph{Discrete Optimization}, 44\penalty0 (Part~1):\penalty0 100620,
  2022.

\bibitem[Cygan et~al.(2015)Cygan, Fomin, Kowalik, Lokshtanov, Marx, Pilipczuk,
  Pilipczuk, and Saurabh]{Cyg15}
M.~Cygan, F.~V. Fomin, L.~Kowalik, D.~Lokshtanov, D.~Marx, M.~Pilipczuk,
  M.~Pilipczuk, and S.~Saurabh.
\newblock \emph{Parameterized Algorithms}.
\newblock Springer, 2015.

\bibitem[Dereich and Kassing(2022)]{DK22}
S.~Dereich and S.~Kassing.
\newblock On minimal representations of shallow {ReLU} networks.
\newblock \emph{Neural Networks}, 148:\penalty0 121--128, 2022.

\bibitem[{Dey} et~al.(2020){Dey}, {Wang}, and {Xie}]{DWX20}
S.~S. {Dey}, G.~{Wang}, and Y.~{Xie}.
\newblock Approximation algorithms for training one-node {ReLU} neural
  networks.
\newblock \emph{{IEEE} Transactions on Signal Processing}, 68:\penalty0
  6696--6706, 2020.

\bibitem[Downey and Fellows(2013)]{DF13}
R.~G. Downey and M.~R. Fellows.
\newblock \emph{Fundamentals of Parameterized Complexity}.
\newblock Springer, 2013.

\bibitem[Froese et~al.(2022)Froese, Hertrich, and Niedermeier]{FHN22}
V.~Froese, C.~Hertrich, and R.~Niedermeier.
\newblock The computational complexity of {ReLU} network training parameterized
  by data dimensionality.
\newblock \emph{Journal of Artificial Intelligence Research}, 74:\penalty0
  1775--1790, 2022.

\bibitem[Garey and Johnson(1979)]{GJ79}
M.~R. Garey and D.~S. Johnson.
\newblock \emph{Computers and Intractability: A Guide to the Theory of
  {NP}-Completeness}.
\newblock W. H. Freeman and Company, 1979.

\bibitem[Giannopoulos et~al.(2009)Giannopoulos, Knauer, and Rote]{GKR09}
P.~Giannopoulos, C.~Knauer, and G.~Rote.
\newblock The parameterized complexity of some geometric problems in unbounded
  dimension.
\newblock In \emph{Parameterized and Exact Computation, 4th International
  Workshop (IWPEC~'09)}, volume 5917 of \emph{LNCS}, pages 198--209. Springer,
  2009.

\bibitem[Glorot et~al.(2011)Glorot, Bordes, and Bengio]{GBB11}
X.~Glorot, A.~Bordes, and Y.~Bengio.
\newblock Deep sparse rectifier neural networks.
\newblock In \emph{Proceedings of the Fourteenth International Conference on
  Artificial Intelligence and Statistics (AISTATS~'11)}, volume~15 of
  \emph{{JMLR} Proceedings}, pages 315--323. JMLR.org, 2011.

\bibitem[Goel et~al.(2017)Goel, Kanade, Klivans, and Thaler]{GKKT17}
S.~Goel, V.~Kanade, A.~R. Klivans, and J.~Thaler.
\newblock Reliably learning the {ReLU} in polynomial time.
\newblock In \emph{Proceedings of the 30th Conference on Learning Theory
  (COLT~'17)}, volume~65 of \emph{PMLR}, pages 1004--1042. {PMLR}, 2017.

\bibitem[Goel et~al.(2021)Goel, Klivans, Manurangsi, and Reichman]{GKMR21}
S.~Goel, A.~R. Klivans, P.~Manurangsi, and D.~Reichman.
\newblock Tight hardness results for training depth-2 {ReLU} networks.
\newblock In \emph{12th Innovations in Theoretical Computer Science Conference
  (ITCS~'21)}, volume 185 of \emph{LIPIcs}, pages 22:1--22:14. Schloss Dagstuhl
  - Leibniz-Zentrum f{\"{u}}r Informatik, 2021.

\bibitem[Haase et~al.(2023)Haase, Hertrich, and Loho]{HHL23}
C.~Haase, C.~Hertrich, and G.~Loho.
\newblock Lower bounds on the depth of integral {ReLU} neural networks via
  lattice polytopes.
\newblock In \emph{Proceedings of the Eleventh International Conference on
  Learning Representations (ICLR~'23)}, 2023.

\bibitem[Hertrich and Sering(2023)]{HS21}
C.~Hertrich and L.~Sering.
\newblock {ReLU} neural networks of polynomial size for exact maximum flow
  computation.
\newblock In \emph{The 24th Conference on Integer Programming and Combinatorial
  Optimization (IPCO~'23)}, 2023.
\newblock Accepted for publication.

\bibitem[Hertrich et~al.(2021)Hertrich, Basu, Di~Summa, and Skutella]{HBDS21}
C.~Hertrich, A.~Basu, M.~Di~Summa, and M.~Skutella.
\newblock Towards lower bounds on the depth of {ReLU} neural networks.
\newblock \emph{Advances in Neural Information Processing Systems},
  34:\penalty0 3336--3348, 2021.

\bibitem[Huchette et~al.(2023)Huchette, Muñoz, Serra, and Tsay]{HMST23}
J.~Huchette, G.~Muñoz, T.~Serra, and C.~Tsay.
\newblock When deep learning meets polyhedral theory: A survey.
\newblock \emph{CoRR}, abs/2305.00241, 2023.

\bibitem[Impagliazzo and Paturi(2001)]{IP01}
R.~Impagliazzo and R.~Paturi.
\newblock On the complexity of {$k$-SAT}.
\newblock \emph{Journal of Computer and System Sciences}, 62\penalty0
  (2):\penalty0 367--375, 2001.

\bibitem[Khalife and Basu(2022)]{KB22}
S.~Khalife and A.~Basu.
\newblock Neural networks with linear threshold activations: structure and
  algorithms.
\newblock In \emph{The 23rd Conference on Integer Programming and Combinatorial
  Optimization (IPCO~'22)}, pages 347--360, 2022.

\bibitem[LeCun et~al.(2015)LeCun, Bengio, and Hinton]{LBH25}
Y.~LeCun, Y.~Bengio, and G.~Hinton.
\newblock Deep learning.
\newblock \emph{Nature}, 521\penalty0 (7553):\penalty0 436--444, 2015.

\bibitem[Pilanci and Ergen(2020)]{PE20}
M.~Pilanci and T.~Ergen.
\newblock Neural networks are convex regularizers: Exact polynomial-time convex
  optimization formulations for two-layer networks.
\newblock In \emph{Proceedings of the 37th International Conference on Machine
  Learning (ICML~'20)}, volume 119 of \emph{Proceedings of Machine Learning
  Research}, pages 7695--7705. {PMLR}, 2020.

\bibitem[Shalev-Shwartz and Ben-David(2014)]{SB14}
S.~Shalev-Shwartz and S.~Ben-David.
\newblock \emph{Understanding Machine Learning: From Theory to Algorithms}.
\newblock Cambridge University Press, 2014.

\bibitem[Zhang et~al.(2021)Zhang, Bengio, Hardt, Recht, and Vinyals]{ZBHRV21}
C.~Zhang, S.~Bengio, M.~Hardt, B.~Recht, and O.~Vinyals.
\newblock Understanding deep learning (still) requires rethinking
  generalization.
\newblock \emph{Communications of the ACM}, 64\penalty0 (3):\penalty0 107--115,
  2021.

\end{thebibliography}

\end{document}